\documentclass[conference]{IEEEtran}
\IEEEoverridecommandlockouts

\usepackage{textcomp}
\usepackage{cite}
\usepackage{xurl}
\usepackage[hidelinks,breaklinks=true]{hyperref}
\usepackage{amsmath,amssymb,amsfonts}
\usepackage{algorithm}
\usepackage{algpseudocode}
\renewcommand{\algorithmicrequire}{\textbf{Input:}}

\usepackage{graphicx}
\usepackage{booktabs}
\usepackage{subfig}
\usepackage{makecell}
\usepackage{multirow}
\usepackage[framemethod=TikZ]{mdframed}
\usepackage{diagbox}
\usepackage{pifont}
\usepackage{tcolorbox}
\usepackage{xcolor}
\usepackage{enumitem}
\usepackage{tabularx}
\usepackage{array}
\usepackage{doi}
\usepackage{threeparttable}
\usepackage[inkscapelatex=false]{svg}

\newtheorem{definition}{Definition}
\newtheorem{theorem}{Theorem}
\newtheorem{proposition}{Proposition}
\newtheorem{corollary}{Corollary}



\newcommand{\Hist}{\mathcal{H}}
\newcommand{\Keys}{\mathcal{K}}
\newcommand{\Vals}{\mathcal{V}}
\newcommand{\Window}{\mathcal{W}}
\newcommand{\Intervals}{\mathcal{I}}

\newcommand{\Canon}{\mathrm{canon}}

\newcommand{\From}{\mathrm{From}}
\newcommand{\To}{\mathrm{To}}

\newcommand{\Source}{\mathcal{S}}

\def\BibTeX{{\rm B\kern-.05em{\sc i\kern-.025em b}\kern-.08em
    T\kern-.1667em\lower.7ex\hbox{E}\kern-.125emX}}

\begin{document}
\title{Vivace: Exact Temporal OLAP over Interval Histories via Independent Serverless Execution}

\author{
\IEEEauthorblockN{Woohyeok Park}
\IEEEauthorblockA{\textit{Department of Data Science} \\
\textit{Hanyang University}\\
Seoul, Republic of Korea \\
woohyeok@hanyang.ac.kr}
\and
\IEEEauthorblockN{Taeyoon Kim}
\IEEEauthorblockA{\textit{Department of Data Science} \\
\textit{Hanyang University}\\
Seoul, Republic of Korea \\
tykim7@hanyang.ac.kr}
\and
\IEEEauthorblockN{Hyunjoon Kim}
\IEEEauthorblockA{\textit{Department of Data Science} \\
\textit{Hanyang University}\\
Seoul, Republic of Korea \\
hyunjoonkim@hanyang.ac.kr}
\and
\IEEEauthorblockN{Kyungyong Lee\textsuperscript{\textdagger}}
\IEEEauthorblockA{\textit{Department of Data Science} \\
\textit{Hanyang University}\\
Seoul, Republic of Korea \\
kyungyong@hanyang.ac.kr}
}

\maketitle

\begin{abstract}
Temporal online analytical processing (OLAP) analyzes past states of data whose values change over time. Such histories are naturally stored as interval histories, in which each row records the period during which a value remained valid. Because temporal analyses typically arrive in infrequent, intermittent bursts, serverless execution that launches functions only at query time offers a cost advantage over always-on clusters. Splitting a computation that a single process performs as a whole across independent serverless functions, however, breaks correctness in two ways. A function may not receive the rows that determine the state of its time range, and naively summing partial results yields incorrect answers for duration-weighted and cumulative-threshold queries. Existing SQL engines and serverless analytics do not address both problems together.

This paper presents Vivace, a serverless system for exact temporal OLAP over interval histories. Vivace resolves the two problems in separate stages. Before any query arrives, a pre-query layout step partitions the interval history, replicating boundary-crossing intervals so each function computes its range completely from a single file. At query time, a merge step combines partial results under operator-specific rules. Associative aggregates merge intermediate values, and ranking re-orders candidates within each time range. We prove that this partitioned execution matches single-process computation up to canonical form. Evaluated on AWS Lambda with real-world datasets, Vivace reduces latency and monetary cost by up to 82\% and 84\%, respectively, against an equivalent SQL baseline that queries the history directly, demonstrating robust generality and efficiency.
\end{abstract}


\section{Introduction}
\label{sec:introduction}
\emph{Temporal online analytical processing (OLAP)} analyzes past states over the history of data that changes over time, supporting tasks such as post-incident investigation, quarterly market comparison, and regulatory audits over prices, inventory, and equipment configurations. The accumulated changes are commonly stored as an \emph{interval history}~\cite{scd}, which records each period of an unchanged value as a single \([\From,\To)\) row and thus preserves a long history at far lower storage cost than per-time-point snapshots. Because temporal analyses typically exhibit intermittent access patterns~\cite{serverless-db-lambada, skyrise, pixel-turbo}, an always-on cluster incurs high idle costs, whereas the \emph{serverless} execution, which launches only at query time, offers a compelling cost advantage for sporadic workloads~\cite{serverless-db-lambada,scalable-query-starling,cackle,pywren,spes-faas-trade-off}.

The benefit of serverless execution is maximized when functions execute independently without sharing state. However, splitting a monolithic temporal computation across independent functions breaks correctness. Naively partitioning interval data by time range loses information for rows crossing boundaries, preventing functions from reconstructing the true state. Furthermore, simply summing partial results from multiple parallel function executions breaks duration-weighted averages and cumulative counts. Sharing state between functions can compensate, but the added latency and data-movement costs offset the serverless advantage~\cite{storage-functions}. Therefore, exact independent execution requires two conditions. \emph{Input completeness} requires each function to receive every row affecting its time range, and \emph{Operator-correct merge} requires partial results to be combined using semantics-aware aggregation rules.

However, existing approaches satisfy at most one of the two. Object-storage SQL engines such as Athena~\cite{athena} and BigLake~\cite{biglake} can read interval histories, but users must hand-write the overlap detection and boundary clipping logic in every query. Serverless analytics engines such as Lambada~\cite{serverless-db-lambada} and Starling~\cite{scalable-query-starling} provide cost-efficient function execution over object storage, but they target only general relational data and offer neither specialized temporal representation nor merge techniques that interval overlap, duration weighting, and per-time-point counting require. Timeline Index~\cite{timeline-index} and ParTime~\cite{partime} compute temporal analytics efficiently in-memory, which is costly for large, rarely-accessed histories. 

This paper proposes \emph{Vivace}, a serverless temporal OLAP system that satisfies both requirements while keeping every function execution independent. Vivace resolves each requirement in a separate stage. During the pre-query phase, a \emph{layout} step divides the interval history at fixed time boundaries, splitting every interval that spans a boundary, so that each partition file holds the complete state of its own time range. Each function therefore reads a single file and obtains every row its range needs (\emph{input completeness}). At query time, each function evaluates the operator over its own range and emits a compact partial result. A reducer then merges these partial results under operator-specific rules and applies the final computation only once (\emph{operator-correct merge}). Neither stage shares state between functions, and we prove as a theorem that the merged result equals the single-process result up to canonical form, which is a property we call \emph{exactness}.

To the best of our knowledge, Vivace is the first system to characterize the conditions under which temporal OLAP over interval histories decomposes exactly across independent serverless functions. While the underlying techniques are standard, no prior work establishes when the combination preserves single-process semantics. We implemented Vivace on public-cloud FaaS and evaluated it on real datasets with diverse interval characteristics~\cite{spotlake-iiswc, caiso_oasis, wikimedia_mediawiki_history}. Every operator reproduced the single-process ground truth exactly. For a SQL baseline of identical semantics, Vivace eliminated the input-preparation cost that grows with history depth, reducing latency and cost.

\smallskip
\noindent\textbf{Contributions.} The contributions of this paper are as follows.

\begin{itemize}[leftmargin=*,itemsep=3pt,topsep=3pt]

\item \textbf{Problem formulation.}
We formalize \emph{input completeness} and \emph{operator-correct merge}, the conditions under which independent functions compute temporal OLAP over an interval history, each from its own input alone (\S\ref{sec:model}, \S\ref{sec:segment}).
\item \textbf{System design.}
Vivace pre-partitions the interval history by time so that each function computes its range from a single file, and merges partial results under operator-matched rules. It supports predicate windows, duration-weighted aggregation, counts over time, and cross-entity comparison (\S\ref{sec:overview}--\S\ref{sec:operators}).
\item \textbf{Exactness proof.}
For every supported analysis, we prove that partitioned parallel execution equals single-process computation up to canonical form (\S\ref{sec:exactness-theorem}).
\item \textbf{Empirical evaluation.}
On three datasets with different change rates, Vivace reduces latency by up to 82\% and cost by up to 84\% against a SQL baseline, and the same operator families answer every query of an independent SCD Type 2 benchmark without new primitives (\S\ref{sec:eval}).
\end{itemize}

\section{Pitfalls of Serverless Temporal OLAP}
\label{sec:background}
Operational data such as airline ticket prices, e-commerce inventory, and equipment configurations change continuously, and their histories accumulate. Since snapshotting every measurement time grows the row count as entities~$\times$~time points, the common practice is to record a new row only when a value changes. Lakehouse engines such as Iceberg and Delta Lake expose change histories through their table formats, storing them in object storage over long periods~\cite{iceberg-time-travel,delta-cdf}.

\begin{figure}[t]
\centering
\subfloat[Serverless cost model.]{
  \includegraphics[width=0.45\columnwidth]{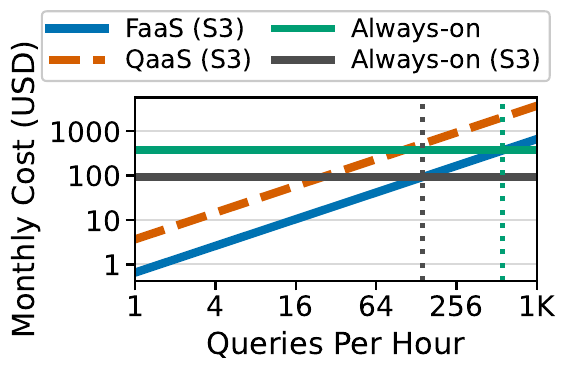}
  \label{fig:motivation-serverless}
}
\hfill
\subfloat[Coupled vs.\ Independent execution.]{
  \includegraphics[width=0.47\columnwidth]{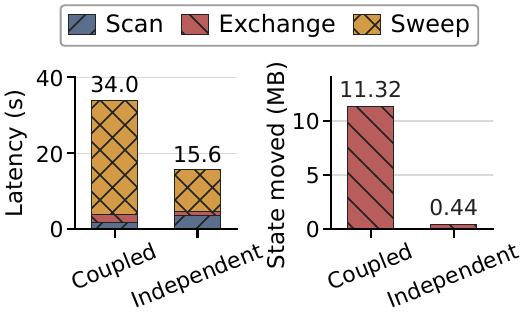}
  \label{fig:motivation-coupled-independent}
}
\caption{Motivating characteristics of serverless temporal OLAP.}
\label{fig:motivation}
\end{figure}

\subsection{Temporal OLAP over Slowly Changing Dimensions}
\label{sec:scd}
We call a history stored around its change points an \emph{\textbf{interval history}}, where each value change closes the previous interval and opens a new one, so each row holds one entity's value together with the interval during which it persisted. For example, if a product's price became \$10 on March~18 and held until May~3, the entire period is one row. This corresponds to \emph{slowly changing dimension (SCD)} Type~2 in data warehouse modeling~\cite{scd}. We denote a row as \((K, V, [\From, \To))\), where \(K\) is the primary key and \(V\) its value during the interval, and call a set of such rows the \emph{valid-from/to source relation}.

\emph{Temporal OLAP} extends traditional OLAP, which aggregates measures over dimensions for decision support~\cite{multidim-db,olap-models-survey}, to interval-valued queries over historical time ranges. For example, answering ``during which intervals in the second quarter was the price at most \$10?'' requires examining the change points within the requested range.
Temporal database research has established \emph{sequenced semantics}, which folds per-time-point query results into \([\From, \To)\) rows, and \emph{coalescing}, which merges adjacent rows with identical payloads into maximal intervals, as the standard for result comparison~\cite{jensen-unification,dignos-sequenced}. These analyses involve overlap detection, boundary-aligned clipping, duration-weighted aggregation, and per-time-point counting, whose cost grows as the history accumulates.

\subsection{Serverless Analytics}
\label{sec:serverless-bg}
For histories that accumulate over long periods but are read only occasionally, an always-on cluster continuously pays for data loading, index and cache maintenance, and idle resources. The Function-as-a-Service (FaaS) approach instead launches functions only at query time, reads only the required data in parallel, and bills only actual execution~\cite{serverless-db-lambada, scalable-query-starling, spes-faas-trade-off}. As Fig.~\ref{fig:motivation-serverless} shows, FaaS and Query-as-a-Service (QaaS) costs scale with query count while always-on costs do not, so the gap widens sharply when analysis is infrequent.\footnote{Based on a query reading a 1\,GiB compressed interval history. FaaS cost is measured; QaaS and always-on paths use us-west-2 public prices (June 2026), with the instance sized by the measured memory footprint.}

\subsection{Execution Models for Interval-History Analytics}
\label{sec:exec-models}
Existing systems compute exact interval-history analytics through the built-in distributed state exchange of managed services~\cite{athena,biglake}, through explicit intermediate state exchanged via shared storage in FaaS engines~\cite{serverless-db-lambada,scalable-query-starling}, or through user-written overlap/clipping SQL. All of these incur additional data movement or coordination at query time, and none supports the \emph{independent} execution, in which each worker reads the single file of its time range, computes its result without exchanging temporal state among workers, and ships only compact partials to the reducer. The difficulty of independent execution surfaces in queries that compare multiple entities on the same time axis, such as finding the lowest-priced entity at each time point in a 10K-entity cohort.

Fig.~\ref{fig:motivation-coupled-independent} decomposes the latency of this query into \emph{Scan} (locating intervals visible in the window), \emph{Exchange} (moving intermediate state through shared storage), and \emph{Sweep} (comparing intervals within each time bucket), along with the intermediate state crossing the mapper--reducer boundary (12-month 10K cohort, identical input and Lambda configuration). The \emph{Coupled} plan follows existing FaaS engines~\cite{serverless-db-lambada,scalable-query-starling}, where workers re-write the visible intervals into time-range buckets in shared storage, and the reducer reads each bucket to rank entities per time point. Intervals crossing bucket boundaries are replicated, inflating the reducer input to more than twice the source intervals. This state movement and replica reprocessing dominate total latency. The \emph{Independent} plan instead materializes the same time division as a pre-query layout. Because each worker's file already holds its intervals clipped to its range, the worker ranks entities within its own input and ships only a compact output to a reducer. For the same query, intermediate state movement thus drops by 25$\times$ and total latency falls below half. Even for cross-entity queries, closing temporal state inside the mappers reduces both the shared-storage exchange and the reducer-side reprocessing.

\subsection{Requirements for Partitioned Temporal OLAP}
\label{sec:requirements}
Consider the fan-out structure~\cite{pywren} of Fig.~\ref{fig:manual-fails}. The query window is divided into time ranges, one mapper handles each range, and a reducer collects the partial results. Product A costs \$10 during \([\text{Mar}\,18, \text{May}\,3)\) and \$20 during \([\text{May}\,3, \text{Jun}\,1)\). With each row stored in the file of the month containing its start time, the \$10 row lands in March, the \$20 row in May, and April holds nothing. Computing ``the average price of A during \([\text{Apr}\,1, \text{Jun}\,1)\)'' exactly imposes two requirements.

\begin{figure}[t]
\centering
\includegraphics[width=\columnwidth]{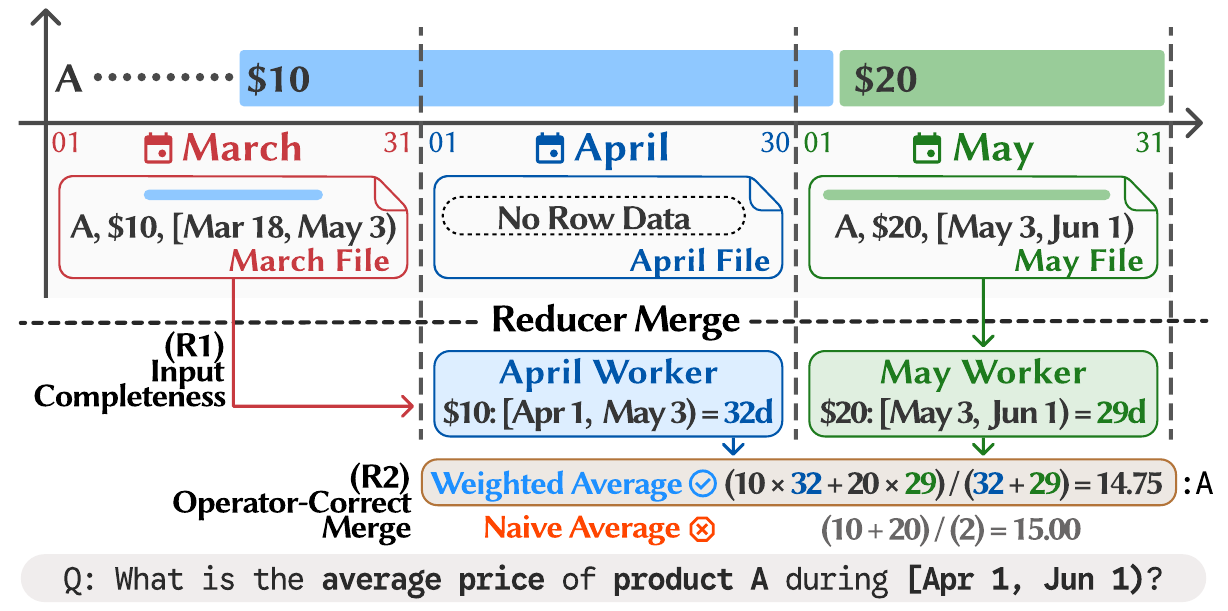}
\caption{Naive serverless fan-out breaks at two points.}
\label{fig:manual-fails}
\end{figure}

\begin{itemize}[leftmargin=*,itemsep=3pt,topsep=3pt]
\item \emph{Input completeness} (R1):
Each mapper must receive every row that determines the state of its range. Here the \([\text{Mar}\,18, \text{May}\,3) \to \$10\) row sits in the March file yet determines the April price, so the April mapper misses it unless it also reads March.

\item \emph{Operator-correct merge} (R2):
Partial results must be combined by rules matching the operator's semantics. The correct average over \([\text{Apr}\,1, \text{Jun}\,1)\) weights by duration, \((10\times32+20\times29)/(32+29)=14.75\), whereas averaging the per-mapper averages gives \((10+20)/2=15.00\). The same problem arises for count thresholds and ranking.
\end{itemize}

Finally, a fan-out result may fragment intervals at mapper boundaries, so comparison with the single-process result requires a \emph{canonical form} that merges adjacent identical ranges.

\section{Vivace Overview}
\label{sec:overview}
Vivace adopts independent execution where each mapper reads only the single file of its time range and ships results to the reducer, with no communication between mappers. This independence minimizes query-time data movement and preserves serverless cost advantages. To satisfy R1 and R2 while preserving this independence, Vivace divides query processing into three steps (Fig.~\ref{fig:Vivace-system-overview}). Before any query, the \emph{layout} step restructures the source relation so that every mapper's input is complete, satisfying R1. At query time, each \emph{mapper} evaluates its range and emits a partial result; the \emph{reducer} merges these partials under operator-specific rules, satisfying R2.

\begin{figure}[t]
\centering
\includegraphics[width=\columnwidth]{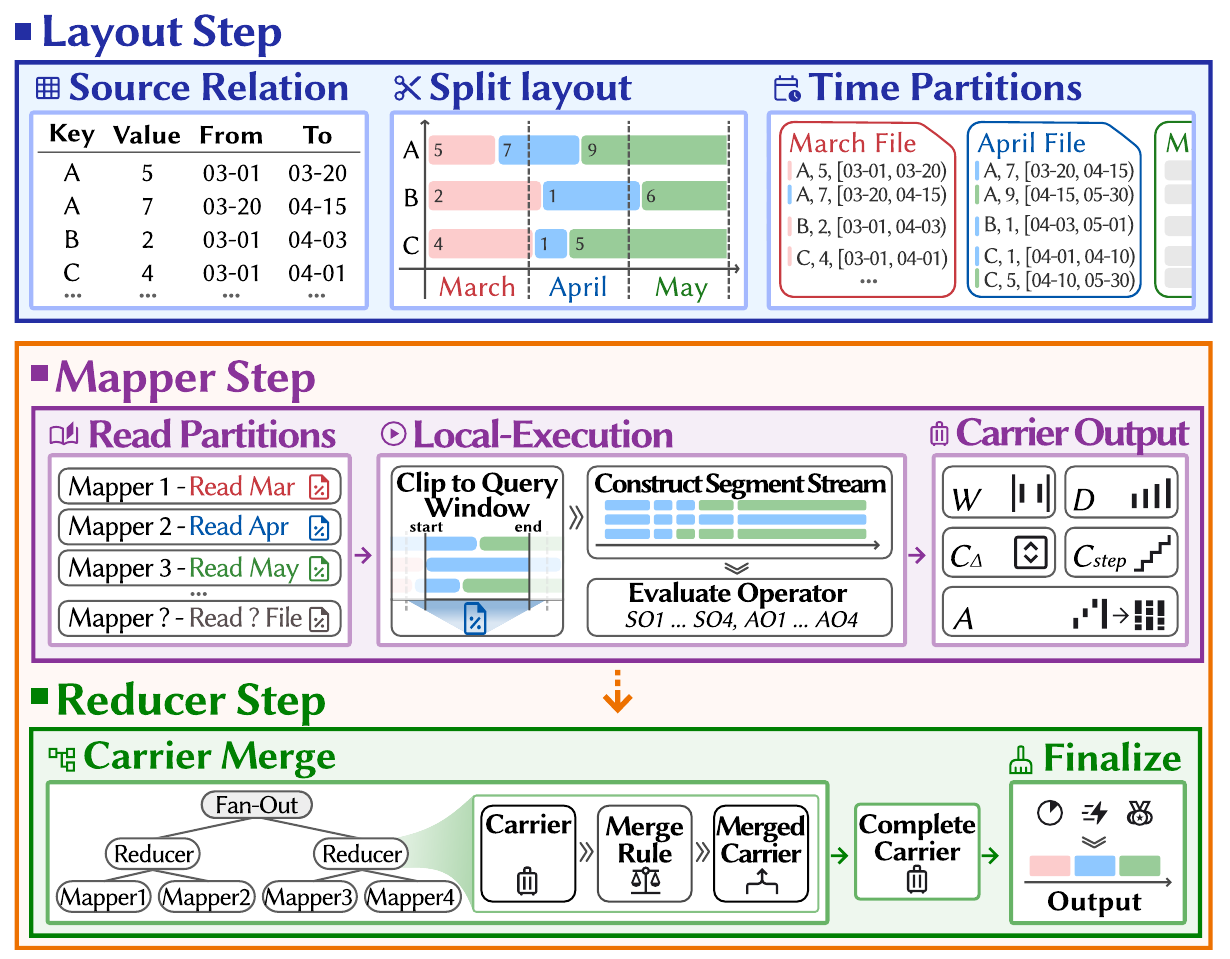}
\caption{Vivace system overview.}
\label{fig:Vivace-system-overview}
\end{figure}

\paragraph{Layout}
Vivace transforms the source relation into a \emph{time-partitioned layout} once, before any query arrives. An interval crossing partition boundaries is split and stored in every file it spans. For example, in a monthly layout, the \([\text{Mar}\,18, \text{May}\,3)\to \$10\) row enters the March, April, and May files. Each range between two layout boundaries is a \emph{time chunk}, and by default the planner assigns one chunk to one mapper, whose input thus holds the complete state of its range.

\paragraph{Mapper}
A mapper splits its range at every point where a value changes, producing a time-ordered \emph{segment stream}. No value changes within a segment, so the per-segment \emph{evaluator} performs predicate evaluation, duration accumulation, count-delta derivation, or ranking comparison. The results reach the reducer as a \emph{Carrier}, an intermediate format fixed per operator family.

\paragraph{Reducer}
Because naive summation violates R2 for some operators, each Carrier holds a combinable partial result and a finalize rule applied once after merging. For duration-weighted averages, the weighted sums and durations are added across mappers, and the division is performed only at finalize. Chunk-fragmented results are then coalesced into canonical form for comparison with the single-process result.

\section{Data Model and Operators}
\label{sec:model}
Vivace handles the history of the values entities hold over time. Arranging the rows $(K,V,[\From,\To))$ of one entity in time order yields a timeline of when the entity held which value. We write this entire timeline as the \emph{interval history} $\Hist$, the set of possible entities as $\Keys$, and the values as $\Vals$. The value entity $K$ holds at time $\tau$ is written $\mathsf{state}_{\Hist}(K,\tau)$, and since rows of the same entity never overlap in time, this value is unique at each time point. Over this data we define the layout, mapper, and reducer as a formal model.
 
Storing this history directly as rows yields the \emph{valid-from/to source relation} $\Source$. We assume each row holds the full state of its entity rather than a subset of columns, and call such rows \emph{full-state rows}. A source that records only the changed columns is first reconstructed into full-state rows. A single row thus reveals every attribute of the entity at that time, so multiple predicates and aggregates can be evaluated over it.
 
Given a query window $\Window=[t_s,t_e)$, we write $\Intervals(\Hist,\Window)$ for the rows that overlap the window, clipped to its boundaries. After clipping, rows of the same entity still do not overlap, and each row's $V$ remains constant within its interval. All operators in this paper operate over this $\Intervals(\Hist,\Window)$.

 
\begin{table*}[t]
\centering
\caption{Operator families and mapper outputs.}
\label{tab:operator-family}
\begin{tabularx}{\textwidth}{@{}l l l l X@{}}
\toprule
\textbf{Family} & \textbf{Segment input} & \textbf{Per-segment computation} & \textbf{Mapper output} & \textbf{Final result} \\
\midrule
\textbf{SO1} predicate window & same-base & $\varphi(V)$ & interval window $W$ & per-key intervals where $\varphi$ holds \\
\textbf{SO2} duration aggregate & same-base & $x(V) \cdot (T-F)$ & duration partial $D$ & duration-weighted aggregate by group \\
\textbf{SO3} boolean composition & same-base & $\varphi_a \wedge \varphi_b$, $\vee$, $\wedge \neg$ & interval window $W$ & intersection, union, difference intervals \\
\textbf{SO4} count timeline & same-base & endpoint delta $\pm 1$ & $C_\Delta \to C_{step}$ & group count timeline; threshold windows \\
\textbf{AO1} pairwise compare & aligned pair & $V_{left} \prec V_{right}$ & aligned $A$ (boolean) & comparison timeline between two entities \\
\textbf{AO2} count-better-than & aligned cohort & $|\{j : V_j \prec V_{ref}\}|$ & aligned $A$ (count) & count timeline over better candidates \\
\textbf{AO3} winner / top-$k$ & aligned cohort & ranking per segment & aligned $A$ (ranked) & winner or top-$k$ timeline \\
\bottomrule
\end{tabularx}
\end{table*}
 
\subsection{Segment Stream}
\label{sec:segment}
The rows a mapper receives from its partition file cannot be used directly for computation over time ranges. Multiple entities mix at the same time points, and the query window may cut through the middle of a time chunk. The mapper therefore cuts its range wherever a value changes and treats the span between two adjacent cuts as one segment.
 
\begin{definition}[Segment stream]
\label{def:segment-stream}
Let $P = [a,b)$ be the time range a mapper is responsible for, $\Window = [t_s,t_e)$ the query window, and $(K,V,[F,T))$ a mapper input row. A row that overlaps both $P$ and $\Window$ is clipped as
\begin{equation}
(K,V,[F',T')),\,
F'=\max\{F,a,t_s\},\,
T'=\min\{T,b,t_e\}.
\end{equation}
The stream of clipped rows with $F' < T'$, sorted in ascending order of start time, is called the \emph{segment stream} $S(P,\Window)$.
\end{definition}
 
Segments of the same key do not overlap in time and $V$ is constant within one segment, so every operator family takes this stream as a common input and processes it in time order.
 
\subsubsection{Closed Segment Partition}
 
When multiple mappers share the window, their results combine to equal the whole only if each mapper fully reproduces the state of its assigned range and keys from its own input. Each mapper piece, the unit dividing the window, is thus closed within its own input, and we call this condition the \emph{closed segment partition}.
 
Suppose the query window $\Window$ is divided into non-overlapping mapper pieces, where each piece $(j,s)$ handles time chunk $P_j$ and key subset $K_s$, and the pieces cover every time and key of $\Window$.
 
\begin{definition}[Closed segment partition]
\label{def:closed-partition}
A mapper piece division $\{(j,s)\}$ forms a \emph{closed segment partition} if the local segment stream $S_{j,s}$ of each piece equals the restriction of the global segment stream $\Intervals(\Hist,\Window)$ to that piece. If $K_s = \Keys$, the partition is a time-only partition. Formally, for every $K \in K_s$ and $\tau \in P_j$,
\begin{equation}
\mathsf{state}{S{j,s}}(K,\tau)=\mathsf{state}_{\Hist}(K,\tau).
\end{equation}
\end{definition}

\subsubsection{Building the Segment Stream}
 
A segment stream is built in one of two ways. The first projects a single interval relation directly into a segment stream. Since the same row carries the full state of all attributes, multiple predicates, aggregates, and boolean compositions are evaluated over the same segments.
The second aligns the entities of a \emph{cohort} $C = \{K_1,\ldots,K_m\}$, the entity set a query designates for comparison, onto common ranges. Every time point where any entity's value changes becomes a segment boundary, so the entire cohort holds constant values within each segment and the evaluator can compare them together.

\subsection{Time-Partitioned Layout}
\label{sec:layout}
 
The time-partitioned layout is the mappers' input, the source relation cut before queries arrive. It is built over predetermined time boundaries $b_0 < b_1 < \cdots < b_M$, and each range between adjacent boundaries $[b_\ell,b_{\ell+1})$ is a \emph{time chunk}. The chunk width $\Delta_{chunk} = b_{\ell+1} - b_\ell$ is fixed at build time.
 
\begin{definition}[Time-partitioned layout]
\label{def:layout}
A layout over time boundaries $b_0 < b_1 < \cdots < b_M$ places a partition file $\Pi_\ell$ for each time chunk $[b_\ell,b_{\ell+1})$, where
\begin{equation}
\begin{split}
\Pi_\ell = \bigl\{(K,V,[\max(F,b_\ell),\min(T,b_{\ell+1}))) \mid \\
(K,V,[F,T))\in\Source,\ F<b_{\ell+1},\ T>b_\ell\bigr\}.
\end{split}
\end{equation}
Every row in $\Pi_\ell$ has its time interval contained in $[b_\ell,b_{\ell+1})$.
\end{definition}
 
Under this definition, a boundary-crossing source row is clipped and enters every partition file it spans, so the closed segment partition of \S\ref{sec:segment} holds automatically once a mapper receives only its own partition file. In SQL/file baselines that scan the source relation directly, the mapper itself locates the rows overlapping its chunk to meet the same condition. We call this layout \emph{interval-clipped partitioning} (ICP). Each partition file directly holds the $(K,V,[\From,\To))$ rows clipped to its time chunk, so a mapper reads it as the input of its range without further transformation.

\subsubsection{Time-Chunk Granularity and Layout Size}
\label{sec:chunk-size}
 
The time chunk size $\Delta_{chunk}$ jointly determines the storage volume and the mapper input size. ICP clips boundary-crossing intervals into every chunk they span, so as chunks shrink, one source interval is replicated across more chunks and the stored rows grow. We measure this inflation by the row amplification $A_{row}$ and byte amplification $A_{byte}$ relative to the source, and select the chunk width to keep both within budget.
 
\paragraph{Lifetime and chunk width}
A source interval of length $\ell$ crosses the boundaries of chunks of width $w$ ($=\Delta_{chunk}$) about $\ell/w$ times and is therefore clipped into about $1+\ell/w$ rows. Averaging over all intervals gives
\begin{equation}
A_{row}(w) \approx 1 + \frac{\bar\ell}{w}
\end{equation}
where $\bar\ell$ is the mean interval lifetime. Amplification thus tracks the ratio of mean lifetime to chunk width. With short lifetimes, amplification stays small even under frequent changes. With long lifetimes, the same interval is clipped into many chunks even under rare changes, inflating storage.
 
\paragraph{Storage-safe time width}
Let $w_{safe}$ be the smallest chunk width keeping amplification within the budgets $\epsilon_R,\epsilon_B$.
\begin{equation}
w_{\mathrm{safe}}=\min\{w: A_{\mathrm{row}}(w)\le1+\epsilon_R
\wedge A_{\mathrm{byte}}(w)\le1+\epsilon_B\}.
\end{equation}
Chunks finer than $w_{safe}$ exceed the budget, so $w_{safe}$ is the lower bound on the chunk width, and the mapper input size is fitted by dividing primary keys rather than shrinking chunks.

\subsection{Operator Families}
\label{sec:operators}
 
Implementing each complex temporal query separately would require re-fitting the mapper input and the merge rules for every query type. Vivace instead defines a small set of basic operators and expresses more complex analyses as their compositions. As long as the basic operators guarantee fan-out execution and exact merging, queries composed from them inherit the same guarantee. Each operator reads a segment stream and produces an interval result or a timeline, and Table~\ref{tab:operator-family} summarizes the segment input, mapper output, and final result of each family.
 
Predicate windows (SO1 and SO3), duration-weighted summaries (SO2), and count timelines (SO4) are \emph{same-base operators}, whose results depend only on the $V$ of a single row (\S\ref{sec:same-base-ops}). Aligned cohort timelines (AO1--AO3) are \emph{aligned-cohort operators}, which compare multiple entities of a cohort within the same segment (\S\ref{sec:aligned-ops}). The input is restricted to a single interval relation, or its join with time-invariant dimensional attributes.
 
Analyses beyond this scope are not supported, because their results do not close over the mapper-local segment stream alone. An $N$-way temporal join requires inputs from multiple relations~\cite{overlap-interval-join}, a dynamic comparison cohort changes during execution, a sequential pattern needs ordering state across segment boundaries, and a sliding-window query needs overlap across partitions~\cite{arraystore,scalable-temporal-agg-icde00} or prefix/suffix state~\cite{prefix-sum-cikm2024}.

\subsubsection{Same-Base Operators}
\label{sec:same-base-ops}
 
\paragraph{Predicate window (SO1)}
For a predicate $\varphi:\Vals \to \{0,1\}$, SO1 is defined as
\begin{equation}
\mathrm{SO}_1(\Intervals,\varphi) = \{(K,V,[F,T)) \in \Intervals \mid \varphi(V) = 1\}.
\end{equation}
The result is the full-state intervals where the condition holds.
 
\paragraph{Duration-weighted aggregation (SO2)}
Given a function $G(V)$ that determines the aggregation group from a row's value $V$ and an expression $x(V)$ that extracts the quantity to aggregate, the duration-weighted average of group $g$ is
\begin{equation}
\mathrm{TWA}_g = \frac{\sum_{i:G(V_i)=g} x(V_i)\Delta_i}{\sum_{i:G(V_i)=g}\Delta_i},\quad \Delta_i = T_i - F_i.
\end{equation}
The mapper outputs the duration partial before the final ratio,
\begin{equation}
D_g = \bigl(\textstyle\sum x_i\Delta_i,\ \sum\Delta_i,\ n,\ \min x_i,\ \max x_i\bigr),
\end{equation}
which carries the weighted sum, the total duration, the segment count, and the extremes of $x$ for group $g$. The ratio is computed once after the $D_g$ of all mappers are collected.
 
\paragraph{Boolean composition (SO3)}
Predicates $\varphi_a, \varphi_b$ are evaluated over the same full-state row, reducing computation to a boolean expression on one row without interval intersection.
\begin{align}
\mathrm{SO}_{3I} &= \{r \in \Intervals \mid \varphi_a(r.V) \wedge \varphi_b(r.V)\}, \\
\mathrm{SO}_{3U} &= \{r \in \Intervals \mid \varphi_a(r.V) \vee \varphi_b(r.V)\}, \\
\mathrm{SO}_{3D} &= \{r \in \Intervals \mid \varphi_a(r.V) \wedge \neg \varphi_b(r.V)\}.
\end{align}
 
\paragraph{Count timeline and threshold (SO4)}
The number of entities in group $g$ for which the predicate holds at time $t$ is

\begin{equation}
\label{eq:count-def}
\resizebox{0.99\columnwidth}{!}{$
\displaystyle
N_g(t)=\bigl|\{(K,V,[F,T))\in\Intervals \mid G(V)=g,\ \varphi(V),\ F\le t<T\}\bigr|
$}
\end{equation}

The count timeline is the set of maximal intervals $(g,[F,T),N_g(F))$ over which $N_g(t)$ is constant. The threshold result keeps only the intervals with $N_g(t) > c^*$. The mapper emits per-group $+1/-1$ deltas $\Delta_g$ as $C_\Delta$ at each endpoint of the intervals where the predicate holds, and the reducer builds the full count step timeline $C_{step}$ by cumulative summation in time order.

\subsubsection{Aligned-Cohort Operators}
\label{sec:aligned-ops}
Comparing multiple entities at the same time points requires their segment boundaries to coincide.
To this end, all interval endpoints of the cohort $C$, the entity set the query designates, become common boundaries, and clipping each entity's history at these boundaries yields the \emph{aligned segment stream}.
Within one aligned segment, every entity of $C$ holds a constant value, so per-segment comparison is possible.
Aligned-cohort operators apply their evaluators over this stream.
 
\begin{definition}[Cohort-closed input]
\label{def:cohort-closed}
The input of a mapper piece is \emph{cohort-closed} with respect to cohort $C$ if it contains the interval rows of every entity needed to reproduce the aligned segment stream of $C$ within the piece's time chunk.
\end{definition}

\paragraph{Pairwise compare (AO1)}
Under comparator $\prec$, the comparison result of two entities $h_a, h_b$ is
\begin{equation}
\mathrm{Cmp}(h_a, h_b, x, \prec) = \{[F,T) \mid x_a(t) \prec x_b(t)\ \forall t \in [F,T)\}.
\end{equation}
 
\paragraph{Count-better-than (AO2)}
For a reference entity $h_r$ and candidates $\{h_j\}$, the number of candidates better than the reference at time $t$ is
\begin{equation}
\mathrm{CB}(t) = |\{j \mid x_j(t) \prec x_r(t)\}|.
\end{equation}
 
\paragraph{Winner and top-$k$ (AO3)}
In each aligned segment, the active values within the cohort are ordered by a deterministic comparator and ranks $1, \ldots, k$ are output. The winner timeline is the case $k = 1$.
 
\subsubsection{Composition}
Operators compose within and across families. For example, a predicate window or a boolean result feeds a count timeline, and a threshold is then applied. Thresholds and ranking are final computations applied once after all mapper results gather at the reducer. A composition that restricts one window to the key set identified by another predicate is handled by the \emph{key-domain finalizer}, which applies the restriction once after both windows are complete.

\section{Serverless Execution and Correctness}
\label{sec:exact}
In serverless fan-out execution, each mapper evaluates the segment stream of one range of the query window and emits partial results as Carriers, which the reducer combines into the final result. Matching the single-process result requires input completeness (R1) and operator-correct merge (R2). We call this equality \emph{exactness} and prove that it holds.
 
Vivace executes a query in the three stages of Algorithm~\ref{alg:execution} (Fig.~\ref{fig:vivace-implementation-overview}) and is available as open source.\footnote{\url{https://github.com/ddps-lab/vivace}} For each query $Q$, the plan fixes, by operator family, the segment evaluator $m_Q$, the merge rule $M_Q$, the finalize map $\mathrm{fin}_Q$, and the canonicalizer $\Canon_Q$. Carriers are written to object storage and completions to the commit registry. The reducer starts only after every mapper task has committed. Commits are idempotent, so a re-executed mapper's result is reflected only once.
 
\begin{figure}[t]
\centering
\includegraphics[width=\columnwidth]{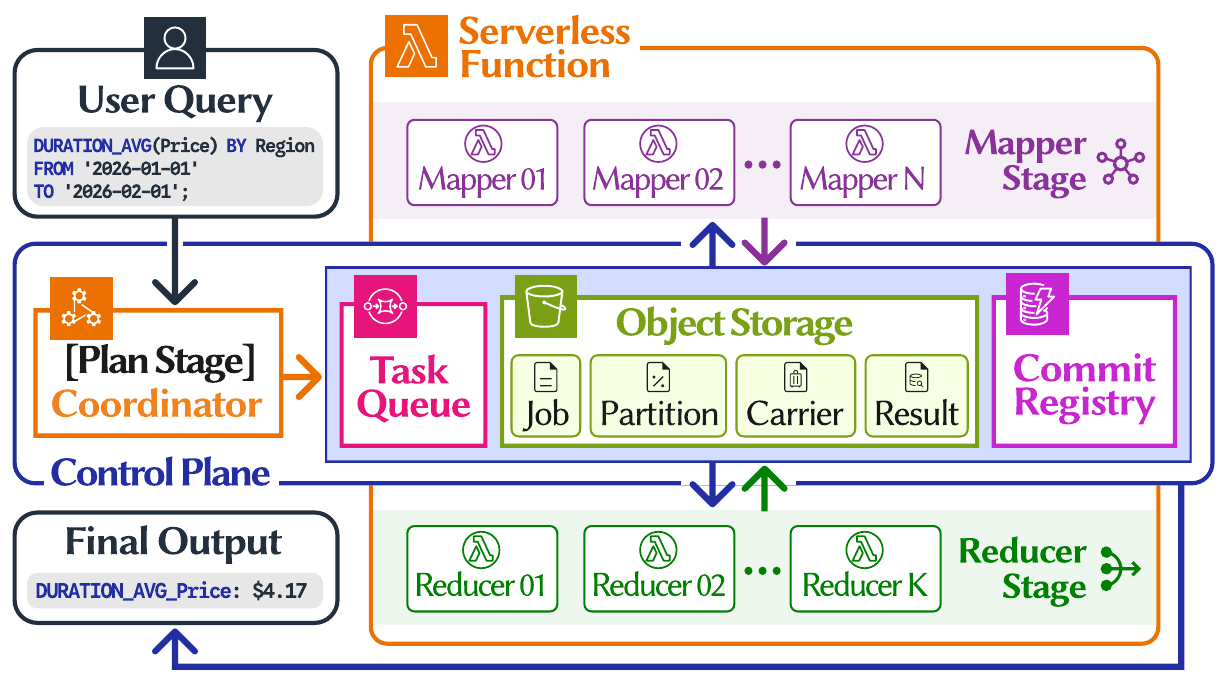}
\caption{Vivace implementation overview.}
\label{fig:vivace-implementation-overview}
\end{figure}

\begin{algorithm}[t]
\caption{Fan-out execution of query $Q$ over window $\Window$}
\label{alg:execution}
\small
\begin{algorithmic}[1]
\Require query $Q$ over window $\Window$;
\Statex \hphantom{\algorithmicrequire\ }ICP partition files $\Pi_j$ per time chunk $P_j$
\Ensure canonicalized result of $Q$
\For{each time chunk $P_j$ overlapping $\Window$}\Comment{\textbf{Plan stage}}
    \State divide $P_j$ into mapper pieces $(j,s)$, $s=1,\ldots,q_j$
    \State register one mapper task per piece in the task queue
\EndFor
\ForAll{tasks $(j,s)$ \textbf{in parallel}}\Comment{\textbf{Mapper stage}}
    \State read mapper input $X_{j,s}$ from partition file $\Pi_j$; clip to $P_j \cap \Window$
    \State build the segment stream $S_{j,s}$ 
    \State write Carrier $C_Q^{j,s} \gets m_Q(S_{j,s})$; commit task $(j,s)$
\EndFor
\State wait until every task has committed\Comment{completion barrier}
\State $\bar{C}_Q \gets M_Q(\{C_Q^{j,s}\})$\Comment{\textbf{Reducer stage}: merge law}
\State \Return $\Canon_Q(\mathrm{fin}_Q(\bar{C}_Q))$\Comment{finalize once}
\end{algorithmic}
\end{algorithm}
 
\subsection{Query Planning and Mapper Execution}
\label{sec:plan-and-mapper}
 
\paragraph{Plan validation}
The planner converts a logical request into the tasks of Algorithm~\ref{alg:execution}, but first validates the execution conditions required by the merge laws of Table~\ref{tab:merge-laws}.
It rejects as non-executable: operators outside the supported scope, inputs without full-state payloads, plans missing the completion barrier, and ranked queries without a fixed candidate shard placement or a deterministic tie-breaking rule.

\paragraph{Bounded mapper pieces}
When one chunk is large relative to the mapper memory, shrinking the chunk width would inflate storage through boundary clipping, so the width is kept and the chunk's primary keys are divided by hash into \emph{shards}. For same-base queries, the mapper input $X_{j,s}$ of Algorithm~\ref{alg:execution} is the rows of its chunk whose key hashes to $s$.
 
Same-base operators are independent per key, so the closed segment partition is preserved as long as the shards are disjoint and cover all keys. Aligned-cohort queries divide the query's candidate set into disjoint shards instead of hash shards. Count-better-than replicates the reference history into each shard and sums the partial counts, and winner/top-$k$ has the reducer re-rank the per-shard top-$k$ to restore the global ranks. Without a separate physical layout, an aligned mapper reads only its own cohort keys from the same hash-sharded partition files, satisfying cohort-closed correctness without gathering the entire cohort into one mapper.
 
 
The shard count $q_j$ is the smallest value for which each shard's estimated compressed bytes fit within the calibrated ceiling $B_{fit}$, and the estimate comes from a per-PK byte histogram, reflecting key skew. $B_{fit}$ is the largest compressed piece size that a mapper completes without Out of Memory (OOM). Since the mapper decompresses and processes its piece in memory, $B_{fit}$ is far smaller than the memory capacity.

\paragraph{Carriers}
 
The mapper emits its partial results as \emph{Carriers}, intermediate results whose output schema, merge method, finalization point, and canonicalization method are fixed by the plan.
The reducer reads only Carriers, never mapper internal state, and applies the merge rule the plan records for each Carrier type.
Table~\ref{tab:carrier-types} summarizes the five Carrier types the runtime uses.

\begin{table*}[t]
\centering
\caption{Carrier types used by the runtime. Main users are the operator families defined in \S\ref{sec:operators}.}
\label{tab:carrier-types}
\setlength{\tabcolsep}{2pt}
\renewcommand{\arraystretch}{1.04}
 
\begin{tabularx}{\textwidth}{@{}
  >{\raggedright\arraybackslash}p{0.13\textwidth}
  >{\raggedright\arraybackslash}p{0.30\textwidth}
  >{\raggedright\arraybackslash}X
  >{\raggedright\arraybackslash}p{0.22\textwidth}
  @{}}
\toprule
\textbf{Carrier} & \textbf{Shape} & \textbf{Invariant} & \textbf{Main users} \\
\midrule
 
\(\boldsymbol{W}\) interval window
& \((K,F,T,[\mathit{payload}])\)
& key/time sorted; non-overlapping per key
& SO1, SO3; threshold windows \\
 
\(\boldsymbol{D}\) duration partial
& \((G,\textstyle\sum_i x_i\Delta_i,\sum_i\Delta_i,n,\min_i x_i,\max_i x_i)\)
& additive partial by group
& SO2 \\
 
\(\boldsymbol{C_\Delta}\) count delta
& \((G,\mathit{Time},\delta)\)
& deltas collapsed by group/time
& SO4 endpoint merge \\
 
\(\boldsymbol{C_{\mathit{step}}}\) step timeline
& \((G,F,T,\mathit{count})\)
& chunk-clipped or globally complete count function
& SO4 full timeline; threshold input \\
 
\(\boldsymbol{A}\) aligned timeline
& boolean/additive/ranked aligned intervals
& coalesced output timeline
& AO1--AO3 \\
 
\bottomrule
\end{tabularx}
\end{table*}
 
\subsection{Merge Laws}
\label{sec:merge-laws}
 
The reducer combines the mappers' Carriers to restore the original interval sets or the full count timeline.
A merge must not start from an incomplete Carrier set, and the completion barrier of Algorithm~\ref{alg:execution} guarantees this.

\begin{definition}[Complete carrier]
\label{def:complete-carrier}
When the set of mapper pieces $\{(j,s)\}$ of query $Q$ forms the closed segment partition of \S\ref{sec:segment} covering the entire query window, the carrier combined over it is called \emph{complete}.
\begin{equation}
\bar{C}_Q = M_Q\bigl(\{C_Q^{j,s}\}\bigr)
\end{equation}
\end{definition}
Table~\ref{tab:merge-laws} summarizes how each Carrier is combined and which conditions are required.
All rules presuppose a closed segment partition, a complete carrier set before finalization, a plan-fixed deterministic finalizer, and canonical output comparison, so the table lists only the rule-specific conditions.
A superscript $(j,s)$ marks one mapper piece's carrier, a bar marks the combined carrier, and $\Canon$ is the canonical coalescing of \S\ref{sec:exactness-theorem}.
 
\begin{table*}[t]
\centering
\caption{Merge rules and required conditions.}
\label{tab:merge-laws}
\renewcommand{\arraystretch}{1.2}
\begin{tabularx}{\textwidth}{@{}l l >{\raggedright\arraybackslash}X >{\raggedright\arraybackslash}X@{}}
\toprule
\textbf{Rule} & \textbf{Carrier} & \textbf{Merge step} & \textbf{Rule-specific condition} \\
\midrule
\textbf{L1} interval union & $W$, boolean $A$ & $\bar{W}=\Canon\bigl(\biguplus_{j,s} W^{j,s}\bigr)$; same-base boolean uses L1 after per-row evaluation & same-base predicates evaluate one full-state row; aligned boolean input is cohort-closed \\
\textbf{L2} additive duration partial & $D$ & $\bar{D}_g=\bigoplus_{j,s} D^{j,s}_g$ with $\oplus$ componentwise $(+,+,+,\min,\max)$; ratio once at finalize & complete duration partials before final ratio \\
\textbf{L3} count delta endpoint addition & $C_\Delta/C_{step}$ & $\bar{N}_g(t)=\sum_{\tau\le t}\sum_{j,s}\Delta^{j,s}_g(\tau)$; canonical coalesce & complete clipped endpoint coverage per group \\
\textbf{L4} threshold after full count & $C_{step}\rightarrow W$ & $\bar{W}=\Canon\bigl(\{(g,[F,T))\mid \bar{N}_g(t)>c^*\ \forall t\in[F,T)\}\bigr)$ & complete count timeline before threshold \\
\textbf{L5} aligned additive count merge & additive $A$ & $\overline{\mathrm{CB}}(t)=\sum_{j,s} \mathrm{CB}^{j,s}(t)$ by endpoint addition; canonical coalesce & fixed cohort; disjoint candidate shards; replicated reference history \\
\textbf{L6} ranked aligned merge & ranked $A$ & collect shard top-$k$ candidates; re-align; recompute global top-$k$ & complete disjoint candidate shards; total order including tie key \\
\textbf{L7} key-domain finalizer & tuple $(W, W_\psi)$ & merge both by L1; $\Canon\bigl(\bar{W} \ltimes_K \pi_K(\bar{W}_\psi)\bigr)$ & both carriers complete; same base key domain \\
\bottomrule
\end{tabularx}
\end{table*}
 
\paragraph{Interval union and duration partial (L1, L2)}
L1 and L2 are immediate over a closed segment partition. For L1, each mapper's keep/drop result covers disjoint time pieces, so union followed by coalescing restores the single-process result. Aligned boolean compare applies the same structure over the aligned Carrier $A$. For L2, the weighted sum, duration, and count are additive and min/max are associative, so combining mapper partials yields the global partial, and only the ratio is computed once after merging.
 
\paragraph{Count timeline and threshold barrier (L3, L4)}
The SO4 count timeline (\S\ref{sec:operators}) tracks the number of entities satisfying the predicate over time, and applies a threshold $c^*$ on top of it. Unlike L1 and L2, this merge is not immediate. For example, if two mappers counting one time point over key shards obtain 40 and 60, the sum is 100. Yet if the threshold $c^*=80$ is applied per mapper first, neither passes and the time point drops out of the result. The threshold must therefore be applied after all mapper counts are summed. The following proposition shows that this structure is exact.
 
\begin{proposition}[Count timeline merge and threshold barrier]
\label{prop:count-timeline}
SO4 is exact if the endpoint deltas of all mappers are summed by group and time, the full step timeline is built by cumulative summation, and the threshold is then applied.
\end{proposition}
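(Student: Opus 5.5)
The argument expresses $N_g(t)$ as a prefix sum of endpoint deltas. It then shows that the mapper deltas, summed by group and time, reproduce the global deltas on the window.

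\textbf{Step 1 (single process).} For the global stream $\Intervals(\Hist,\Window)$, let $R_g$ be the set of rows with $G(V)=g$ and $\varphi(V)=1$. Define $\delta_g(\tau)=|\{r\in R_g: F_r=\tau\}|-|\{r\in R_g:T_r=\tau\}|$. For each $t$, the indicator $[F\le t<T]$ equals $\sum_{\tau\le t}\bigl([\tau=F]-[\tau=T]\bigr)$. Summing this identity over $R_g$ gives $N_g(t)=\sum_{\tau\le t}\delta_g(\tau)$ by (\ref{eq:count-def}). The sums are finite because $\Intervals$ is finite.

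\textbf{Step 2 (mapper decomposition).} Under the closed segment partition (Definition~\ref{def:closed-partition}), which ICP provides (Definition~\ref{def:layout}), and with disjoint key shards covering all keys, the following holds. For each key $K$ and $\tau\in P_j$, $\mathsf{state}$ on piece $(j,s)$ equals the global state. Consequently the per-piece local counts $N_g^{j,s}(t)$, taken over keys in $K_s$ and for $t\in P_j$, satisfy $\sum_s N_g^{j,s}(t)=N_g(t)$ for $t\in P_j$, and $N_g^{j,s}(t)=0$ for $t\notin P_j$.

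Each local count is itself a prefix sum of the local deltas $\Delta_g^{j,s}$ by Step 1 applied to $S_{j,s}$. This holds because clipped rows still start and end exactly where the local indicator turns on and off. Summing over $s$ and over all $j$, and noting that the $P_j\cap\Window$ partition $\Window$, gives $\bar N_g(t)=\sum_{\tau\le t}\sum_{j,s}\Delta^{j,s}_g(\tau)=N_g(t)$ for all $t\in\Window$. This is exactly the L3 rule.

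The delicate point, and the main obstacle, is a row that crosses a chunk boundary $b_\ell$. Such a row is split into pieces: one emits $-1$ at $b_\ell$ and the next emits $+1$ at $b_\ell$. These must cancel in the merged sum. That requires both that the reducer collapses deltas by group and time before cumulating, and that the completion barrier ensures every piece's carrier is present. I would argue this cancellation directly by telescoping over the pieces of a single source row. A missing carrier would break the telescoping, which is why completeness (Definition~\ref{def:complete-carrier}) is required.

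\textbf{Step 3 (step timeline and threshold).} Since $\bar N_g\equiv N_g$ as functions on $\Window$, their maximal constant intervals coincide after canonical coalescing. Adjacent pieces created at delta-free or cancelled times carry equal counts and are merged by $\Canon$, so $C_{step}$ equals the single-process count timeline up to canonical form. The threshold set $\{t:\bar N_g(t)>c^*\}$ is a deterministic function of the complete timeline, so applying L4 after the merge yields the same set, and $\Canon$ makes the representations identical.

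Finally, I would remark why applying the threshold first fails. The map $N\mapsto[N>c^*]$ does not commute with summation over shards, as the $40+60$ example shows. This justifies placing the threshold after the barrier.
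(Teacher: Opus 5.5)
Your proposal is correct and follows the same route as the paper's proof. The closed segment partition gives $\sum_s N^{j,s}_g(t)=N_g(t)$ on each chunk, and each local count is a prefix sum of its endpoint deltas. Summing the deltas by group and time and then cumulating therefore restores $N_g$, after which the threshold is a deterministic final step.

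You also make explicit several points the paper leaves implicit: the indicator prefix-sum identity, the fact that local counts vanish outside $P_j$, and the cancellation of the $-1$/$+1$ pair emitted at a chunk boundary. That cancellation already follows from the linearity argument in your Step 2, so it is a consequence of that step rather than a separate obstacle.
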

 
\begin{IEEEproof}
Let the \emph{local count} $N^{j,s}_g(t)$ be the value mapper piece $(j,s)$ obtains by evaluating Eq.~\eqref{eq:count-def} over its local segment stream. By Definition~\ref{def:closed-partition}, the local stream of each piece equals the restriction of the global stream to its time chunk $P_j$ and key shard $s$. Since the shards are disjoint and their union covers the full key set, $\sum_s N^{j,s}_g(t)=N_g(t)$ for every $t \in P_j$, and the L3 endpoint addition performs this summation.
 
A mapper emits its local count as $C_\Delta$ endpoint deltas, or as chunk-clipped step segments $(g,[F,T),c)$ that the reducer re-expresses as the deltas $+c@F$ and $-c@T$. Summing the deltas of all mappers by group and time gives
\begin{equation}
\Delta_g(t)=\sum_{j,s} \Delta^{j,s}_g(t),
\end{equation}
and cumulative summation in time order restores the full count step timeline. The threshold is the final computation evaluated over the full count timeline.
\end{IEEEproof}
 
\paragraph{Aligned count merge (L5)}
L5 applies the count timeline merge to the candidate shards of an aligned cohort.
 
\begin{corollary}[Aligned count merge]
\label{cor:aligned-count}
Count-better-than (AO2) is expressed as the count payload of the aligned Carrier $A$, and merging by the same endpoint-addition scheme is exact under a fixed cohort, disjoint candidate shards, and a replicated reference history.
\end{corollary}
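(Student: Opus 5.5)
The plan is to reduce Corollary~\ref{cor:aligned-count} to Proposition~\ref{prop:count-timeline}. Once the per-shard count-better-than values are shown to be local counts over a closed partition, the endpoint-addition argument carries over unchanged.

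Fix the cohort $C=\{h_r\}\cup\{h_j\}$, and let the candidate shards $C_1,\ldots,C_q$ be disjoint with union equal to the candidate set. For each time chunk $P_j$ and shard $s$, the mapper input contains the rows of the candidates in $C_s$ together with the replicated history of $h_r$ clipped to $P_j$. Since $\Pi_j$ is an ICP file (Definition~\ref{def:layout}), it holds every row overlapping $P_j$. Hence the piece input is cohort-closed for the sub-cohort $C_s\cup\{h_r\}$ in the sense of Definition~\ref{def:cohort-closed}, and the state of each of these entities at every $t\in P_j$ equals $\mathsf{state}_{\Hist}$.

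The first step is pointwise correctness of the local evaluator. Take $t\in P_j\cap\Window$ and let $\sigma$ be the aligned segment of $C_s\cup\{h_r\}$ containing $t$. Inside $\sigma$, every entity is constant. The mapper's per-segment value is therefore
\begin{equation*}
\mathrm{CB}^{j,s}(t)=|\{h_j\in C_s \mid x_j(t)\prec x_r(t)\}|,
\end{equation*}
which depends only on the true states at $t$. The per-shard aligned boundaries are only a subset of the global cohort boundaries. This does not matter, because the claim is about the value at every $t$, not about the segment structure.

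The second step is the sum over shards. Because the shards are disjoint and cover all candidates, and because $h_r$ is the same replicated reference in every shard, we get
\begin{equation*}
\textstyle\sum_s \mathrm{CB}^{j,s}(t)=\mathrm{CB}(t).
\end{equation*}
This is the analogue of $\sum_s N^{j,s}_g(t)=N_g(t)$ in the proof of Proposition~\ref{prop:count-timeline}. There is a single group, and the count indicator is the comparator in place of $\varphi$. Each mapper emits its additive $A$ payload as chunk-clipped step segments $([F,T),c)$. The reducer rewrites these as the deltas $+c@F$ and $-c@T$, exactly as in that proof, sums them by time, and accumulates them to recover the step function $\overline{\mathrm{CB}}(t)=\mathrm{CB}(t)$ on each chunk. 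The chunks tile $\Window$, so this holds everywhere on $\Window$. The final $\Canon$ coalescing then turns the piecewise-constant function into the maximal-interval form that the single-process computation produces, which is why equality holds up to canonical form.

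The main obstacle is making precise the role of the replicated reference and the fixed cohort. If $h_r$ were present in only one shard, the other shards would have undefined comparisons. If the cohort changed during execution, the union over shards would not equal the global candidate set. I would handle both points explicitly: replication makes $x_r(t)$ identical across shards, and fixing the cohort makes the shard disjointness and cover hold for the whole window. A related subtlety is any threshold applied on $\mathrm{CB}$. I would note that, as in L4, it must be applied after the merge, and then invoke Proposition~\ref{prop:count-timeline}'s threshold-barrier argument verbatim.
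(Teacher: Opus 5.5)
Your proposal is correct and follows the paper's own argument. The paper likewise treats count-better-than as a single-group aligned count timeline, then uses shard disjointness and the replicated reference to conclude that the endpoint-addition sum equals the single-process count, as in Proposition~\ref{prop:count-timeline}; your version spells out steps the paper leaves implicit, such as why coarser per-shard alignment boundaries are harmless because the argument is pointwise in $t$ (avoid reusing $j$ for both the chunk index and the candidate index).
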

 
\begin{IEEEproof}
Count-better-than is an aligned count timeline without a group key. If the reference history accompanies every candidate shard and the shards are disjoint, the cumulative sum of the endpoint-addition merge equals the aligned count of a single process over the same cohort.
\end{IEEEproof}
 
\paragraph{Ranked aligned merge (L6)}
In aligned winner/top-$k$, each candidate belongs to exactly one shard. Within one aligned segment, an entity in the global top-$k$ is also in its own shard's top-$k$, because if its shard held $k$ or more better candidates, its global rank would also fall outside $k$. Re-sorting the per-shard top-$k$ Carriers with the same comparator and tie-breaking key therefore yields the global ranked result.

\paragraph{Key-domain finalizer (L7)}
The \emph{key-domain finalizer} restricts a same-base interval window result $W$ to the key set defined by a separate predicate $\psi$. Letting $E_\psi = \pi_K(\mathrm{SO}_1(\Intervals,\psi))$ be the set of keys satisfying $\psi$,
\begin{equation}
\mathrm{Gate}(W, \psi) = W \ltimes_K E_\psi = \{r \in W \mid r.K \in E_\psi\}.
\end{equation}
The mapper emits both $W$ and $W_\psi = \mathrm{SO}_1(\Intervals,\psi)$ as interval window carriers, and the reducer completes both by L1 and applies the semi-restriction $W \ltimes_K \pi_K(W_\psi)$ once. This finalizer is exact over complete carriers.
 
\begin{proposition}[Key-domain finalizer exactness]
\label{prop:key-domain-gate}
If the interval window carriers $W$ and $W_\psi$ are each completed by L1 over a closed segment partition, and $W^{mono}$ and $W_\psi^{mono}$ denote the two results a single process computes from the same input, then $W \ltimes_K \pi_K(W_\psi)$, applied after all pieces have arrived, equals $W^{mono} \ltimes_K \pi_K(W_\psi^{mono})$ in canonical form.
 
\end{proposition}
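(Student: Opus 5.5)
The argument reduces to the exactness of L1 for each carrier, followed by a check that the semi-restriction commutes with canonical coalescing. First, I will invoke the L1 argument separately for $W$ and $W_\psi$. Over a closed segment partition (Definition~\ref{def:closed-partition}), the local segment stream of each piece $(j,s)$ is the restriction of $\Intervals(\Hist,\Window)$ to $P_j\times K_s$. Because $\varphi$ and $\psi$ are evaluated row by row on full-state rows, the local SO1 output of each piece is exactly the restriction of the global SO1 output to that piece. The pieces are disjoint and cover $\Window\times\Keys$, so the disjoint union of the local outputs equals the single-process result up to fragmentation at chunk boundaries. This gives $\Canon(\bar W)=\Canon(W^{mono})$ and $\Canon(\bar W_\psi)=\Canon(W_\psi^{mono})$.

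Second, I will show that the key projection is insensitive to fragmentation, so that $\pi_K(\bar W_\psi)=\pi_K(W_\psi^{mono})$. A key $K$ appears in $W_\psi^{mono}$ iff some row of $K$ with $F<T$ satisfies $\psi$. Splitting that row at chunk boundaries yields at least one nonempty fragment carrying $K$, which lies in some piece. Conversely, every fragment in $\bar W_\psi$ comes from a global row of the same key, because clipping preserves $K$ and $V$ and retains only fragments with $F'<T'$. So the two projected key sets coincide as sets. For this step, completeness matters: the completion barrier ensures every piece has arrived. Without it, a key whose only $\psi$-satisfying row lies in a missing piece would be dropped, which is why the gate must be applied only once, after all pieces.

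Third, I will show that $\ltimes_K E$ commutes with $\Canon$ for a fixed key set $E$. Coalescing merges only adjacent rows with the same key and payload, and the filter keeps or drops all rows of a key together. Hence $\Canon(W\ltimes_K E)=\Canon(W)\ltimes_K E$. Combining the three steps,
\[
\Canon\bigl(\bar W\ltimes_K\pi_K(\bar W_\psi)\bigr)=\Canon(\bar W)\ltimes_K E_\psi=\Canon(W^{mono})\ltimes_K E_\psi=\Canon\bigl(W^{mono}\ltimes_K\pi_K(W_\psi^{mono})\bigr).
\]

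The main obstacle is the second step. The gate is a global, key-level predicate, while each mapper sees only a slice of time and keys. A mapper-local gate would therefore be wrong: a key can satisfy $\psi$ in one chunk and have $W$ rows in another. The argument has to rely explicitly on the fact that the restriction is applied at the reducer over the complete union. It also needs the same-base-key-domain condition of L7, so that $\pi_K(W_\psi)$ and $W$ range over the same keys.
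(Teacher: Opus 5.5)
Your proposal is correct and takes essentially the same route as the paper: L1 completes both carriers, key projection commutes with the union so $\pi_K(\bar W_\psi)=\pi_K(W_\psi^{mono})$ once all pieces have arrived, and the key-level semi-restriction commutes with $\Canon$ because it keeps or drops whole keys without introducing new boundaries. The only difference is that you spell out the fragmentation argument for the projected key set in more detail than the paper does.
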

 
\begin{IEEEproof}
Let $M_W$ denote the L1 merge of interval window carriers. Then $M_W(\{W^{j,s}\})=W^{mono}$ and $M_W(\{W_\psi^{j,s}\})=W_\psi^{mono}$. Key projection preserves union, so over a closed partition $\bigcup_{j,s} \pi_K(W_\psi^{j,s})=\pi_K(W_\psi^{mono})$. Since $E_\psi$ is computed from the complete $W_\psi$, the membership evidence of a key may come from any time chunk or shard. The semi-restriction is therefore applied once after both carriers are complete. The semi-join keeps or drops all intervals of a key $K$ as a whole and introduces no new time boundaries, so it commutes with $\Canon$. We thus obtain $\Canon(W^{mono} \ltimes_K \pi_K(W_\psi^{mono}))$.
\end{IEEEproof}

\paragraph{Finalization}
After the merge, the reducer applies the query's final computation once.
 
\begin{definition}[Finalizer]
\label{def:finalizer}
Each query $Q$ has a finalize map $\mathrm{fin}_Q$, determined by its operator family, that sends a complete carrier or carrier tuple to the final result. $\mathrm{fin}_Q$ is a deterministic function of the complete carrier state alone, such as the per-group ratio for SO2, the threshold for SO4, the ranking selection for winner/top-$k$, and the key-domain semi-restriction.
\end{definition}
 
When $\mathrm{fin}_Q$ is the identity, the merged carrier is promoted to the result manifest immediately.

\subsection{Exactness Theorem}
\label{sec:exactness-theorem}
 
We now prove that the execution of Algorithm~\ref{alg:execution} produces the same result as a single process.
 
\paragraph{Canonical equality}
 
Two semantically identical results may divide their intervals differently. A fan-out execution may split one interval into two rows at a chunk boundary, whereas a single process emits the same range as one maximal interval. Row-by-row comparison therefore requires first arranging both results into the same canonical form.
 
\begin{definition}[Canonical output form]
\label{def:canonical}
For results carrying intervals, $\Canon_Q$ sorts the rows by their output identifier and time, then merges adjacent rows with $T_i=F_{i+1}$ that match on the comparison criterion into maximal intervals. The criterion differs by output family. A general interval window compares the key and $V$. A result that discards $V$, such as SO3-U, compares only the key. A count timeline compares the group and the count. An aligned timeline compares the aligned result, and a ranked timeline compares the rank together with the selected entity or value.
\end{definition}
 
With the comparison criteria fixed, the maximal interval decomposition is unique, so $\Canon_Q$ is idempotent and unaffected by input order or chunk fragmentation. Result equivalence is judged by
\begin{equation}
\Canon_Q(\mathrm{Result}_{\mathrm{fanout}})=\Canon_Q(\mathrm{Result}_{\mathrm{mono}}).
\end{equation}
For floating-point outputs such as duration-weighted aggregates, the group set, total duration, min, and max are compared exactly, and the TWA within a fixed tolerance.
 
The following theorem formalizes this equivalence.

\begin{theorem}[Closed-partition decomposition exactness]
\label{thm:exactness}
Suppose each row holds the full state of its key, each mapper receives every interval row that determines the active state within its chunk, and for aligned queries the mapper inputs satisfy the cohort-closed condition of \S\ref{sec:operators}. Suppose further that the operator family of query $Q$ satisfies the merge rules and execution conditions of Table~\ref{tab:merge-laws}. Then the fan-out result, obtained by merging the mappers' local evaluation results and applying the finalize map $\mathrm{fin}_Q$ and the canonicalizer $\Canon_Q$, equals the $\Canon_Q$ of the single-process result.
\end{theorem}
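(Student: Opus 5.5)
The plan is to reduce the theorem to three facts and then finish with a case split over the merge rules L1--L7 of Table~\ref{tab:merge-laws}.

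\textbf{Step 1: local streams are restrictions of the global stream.} From the layout and input hypotheses, every mapper piece $(j,s)$ sees a local segment stream $S_{j,s}$ equal to the restriction of $\Intervals(\Hist,\Window)$ to $P_j\times K_s$. For time-only pieces this follows from Definition~\ref{def:layout}. A source row overlapping $P_j\cap\Window$ is clipped into $\Pi_j$, and clipping again by Definition~\ref{def:segment-stream} gives exactly $[\max\{F,a,t_s\},\min\{T,b,t_e\})$. Key sharding and disjoint candidate shards only select rows by key. For aligned queries, the cohort-closed condition (Definition~\ref{def:cohort-closed}) gives the same statement for the aligned segment stream. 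The resulting disjoint pieces cover $\Window$ in time and key, so the multiset union $\biguplus_{j,s}S_{j,s}$ is a refinement of the global stream. It differs only by extra cuts at chunk boundaries, and each cut piece carries the same $V$.

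\textbf{Step 2: the merged carrier equals the single-process carrier.} I would write $\bar C^{mono}_Q$ for the carrier a single process obtains by evaluating $m_Q$ over the whole stream. The goal is to show that $M_Q(\{m_Q(S_{j,s})\})$ equals $\bar C^{mono}_Q$ up to $\Canon_Q$. I would argue this per family, in each case using that $m_Q$ is computed segment-wise.
\begin{itemize}
\item For L1, per-row predicates commute with splitting a row, so the union of local windows refines the global window.
\item For L2, a split segment contributes $x\Delta_1+x\Delta_2=x\Delta$, with equal $\min$ and $\max$, and $\oplus$ is associative and commutative. The segment count $n$ is the one place where refinement changes the value. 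I would handle it by noting that $n$ is not part of the compared output, or by counting source segments.
\item For L3 and L5, Proposition~\ref{prop:count-timeline} and Corollary~\ref{cor:aligned-count} apply directly.
\item For L6, I would use the shard top-$k$ containment argument together with the total order that includes the tie key.
\item For L7, Proposition~\ref{prop:key-domain-gate} applies.
\end{itemize}
The completion barrier of Algorithm~\ref{alg:execution} and idempotent commits ensure that $M_Q$ acts on exactly one carrier per piece.

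\textbf{Step 3: finalization and canonicalization.} Since $\mathrm{fin}_Q$ is a deterministic function of the complete carrier (Definition~\ref{def:finalizer}), equal carriers give equal results. Threshold and ranking are applied only after merging, which is the L4 barrier. The remaining issue is that carriers may be equal only up to fragmentation. I therefore need $\Canon_Q\circ\mathrm{fin}_Q=\Canon_Q\circ\mathrm{fin}_Q\circ\Canon$ on interval-valued carriers. This holds when $\mathrm{fin}_Q$ acts pointwise in time, as thresholds, per-segment ranking, and the key semi-join do, or when it ignores interval boundaries, as the SO2 ratio does. Uniqueness of the maximal decomposition (Definition~\ref{def:canonical}) then closes the argument.

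The main obstacle is this commutation of $\mathrm{fin}_Q$ and $M_Q$ with fragmentation, and stating it uniformly across families without restating each proposition. I would first prove a single lemma: every map used is \emph{pointwise in $t$}, meaning it is determined by the function $t\mapsto$ state, so it is invariant under refinement. I would then verify that property rule by rule. Floating-point TWA is handled only up to the stated tolerance.
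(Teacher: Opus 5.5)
Your proposal is correct and follows the same route as the paper's proof. The closed segment partition decomposes the clipped global stream into the local streams. The merged carrier is then identified with the single-process intermediate result rule by rule (L1, L2, Proposition~\ref{prop:count-timeline}, Corollary~\ref{cor:aligned-count}, the L6 containment argument, and Proposition~\ref{prop:key-domain-gate}), and determinism of $\mathrm{fin}_Q$ followed by $\Canon_Q$ finishes the argument. Your refinement-invariance lemma and your remark about the segment count $n$ in $D_g$ make explicit a point the paper's proof glosses over: it asserts exact equality of the merged and single-process carriers, whereas chunk clipping only gives equality up to fragmentation.
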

 
\begin{IEEEproof}
A mapper piece is a time chunk, optionally paired with a key shard. By Definition~\ref{def:closed-partition}, each piece's local segment stream $S_{j,s}$ equals the restriction of the global segment stream to that piece, so the clipped global stream is the disjoint union
\begin{equation}
S=\biguplus_{j,s}S_{j,s}.
\end{equation}
In the aligned families, cohort-closed inputs supply each mapper the needed entity histories, so the same decomposition holds over aligned segments.
 
Let each mapper output be $C_Q^{j,s}=m_Q(S_{j,s})$. Over a closed segment partition, $M_Q(\{C_Q^{j,s}\})$ is the complete carrier $\bar{C}_Q$ of Definition~\ref{def:complete-carrier}, and under the conditions of Table~\ref{tab:merge-laws} it equals the single process's intermediate result before finalization. Specifically, L1 covers SO1/SO3 and pairwise compare (AO1), L2 addresses SO2; Proposition 1 targets SO4; Corollary 1 handles count-better-than (AO2); L6 covers winner/top-k (AO3); and L7 is paired with Proposition~\ref{prop:key-domain-gate}. Since $\mathrm{fin}_Q$ depends only on the complete carrier or carrier tuple, applying it and then $\Canon_Q$ to both sides yields the same result.
\begin{equation}
\Canon_Q\bigl(\mathrm{fin}_Q(\bar{C}_Q)\bigr)=
\Canon_Q(Q(\Hist,\Window))
\end{equation}
\end{IEEEproof}

\section{Evaluation}
\label{sec:eval}
 
We empirically evaluate the Vivace implementation to address the following research questions.
 
\begin{description}[leftmargin=*, nosep]
\item[RQ1: Exactness.]
Does the fan-out execution result exactly match the result computed by a single process?
 
\item[RQ2: Layout efficiency.]
Does the ICP layout remove the mapper input preparation cost at only a small storage cost?
 
\item[RQ3: Layout portability.]
Can the layout be fitted on datasets with different interval characteristics without recalibrating mapper memory?
 
\item[RQ4: End-to-end efficiency.]
Does Vivace execute at lower monetary cost and latency than a SQL engine baseline?
 
\item[RQ5: Operator generality.]
Do the same operator families produce exact results on a dataset with a different schema?
\end{description}
 
\subsection{Experimental Setup}
\label{sec:eval-setup}

\paragraph{Environment}
All experiments, including the baseline engines and the S3 buckets for source relations and ICP layouts, run in AWS us-west-2. Vivace executes on Lambda with 5\,GB mapper functions, and comparisons within a figure use the same tier. Carriers are exchanged through S3, with SQS as the task queue and DynamoDB as the commit registry.
 
\paragraph{Dataset}
The primary dataset is the AWS spot market history of SpotLake~\cite{spotlake-iiswc}, a public dataset that has recorded the price, savings ratio, interruption frequency, and Spot Placement Score of spot instances for several years under the composite key of region $\times$ availability zone (AZ) $\times$ instance type. A new row at every value change forms one $[\From,\To)$ interval, making the dataset an interval history.
 
Analyses~\cite{multi-spotlake-www, multinode-spot-dataset, spotvista-arxiv, kubepacs} performed over this dataset, such as price-condition interval search, per-region duration-weighted averages, and instance type comparison, correspond directly to Vivace's operator families, so the operator results are the final analyses researchers want, without separate dimension joins. The dataset is also read intermittently, rarely in ordinary operation with bursts at failure investigations or research experiments, the access pattern where the serverless cost advantage is most pronounced. Its public availability facilitates reproduction. Evaluation centers on the 12-month window from March 2025 to March 2026 (Table~\ref{tab:eval-corpora}).

\paragraph{Additional datasets}
The evaluation adds two datasets whose interval characteristics differ from SpotLake.
CAISO LMP~\cite{caiso_oasis} is a public time series recording the hourly locational marginal price of the California ISO day-ahead market per pricing node (12-month evaluation window). Values refresh hourly, so interval lifetimes are short, while the key count and per-chunk data volume are similar to SpotLake.
Wikipedia revision history~\cite{wikimedia_mediawiki_history} records the revision size of English Wikipedia per page (3-month evaluation window). Most pages change little, so interval lifetimes are very long and the key count reaches tens of millions.
The three datasets differ widely in mean interval lifetime and per-chunk data volume (Table~\ref{tab:eval-corpora}), and we use this spread to verify that the layout sizing applies regardless of dataset characteristics.
MobilityDB-TPCDS, an SCD Type 2 benchmark, is used separately to assess operator generality.
 
\begin{table}[t]
\centering
\caption{Evaluation datasets and interval characteristics.}
\label{tab:eval-corpora}
\small
\begin{tabularx}{\columnwidth}{@{}lXrr@{}}
\toprule
\textbf{Dataset} & \textbf{Primary key} & \textbf{\#keys} & \textbf{Avg lifetime} \\
\midrule
SpotLake AWS  & region$\times$AZ$\times$type & 41{,}356 & 2.7\,h\\
CAISO LMP & pricing node & 67{,}892 & 1.3\,h \\
Wikipedia revision & page & 41.5\,M & $\sim$67\,d \\
\bottomrule
\end{tabularx}
\end{table}
 
\paragraph{Time-chunk and mapper configuration}
The source Parquet files are split by month, matching SpotLake's monthly ingestion cycle, and these monthly boundaries serve directly as time-chunk boundaries.
The monthly ICP partition files average 56.2\,MiB with a maximum of 61.6\,MiB, and in the default configuration each mapper reads one monthly chunk.

\paragraph{Baselines and execution paths}
The final end-to-end comparison focuses on two paths.
Athena SCD2 directly reads the natural valid-from/to source relation that users keep in the object store, and Vivace ICP reads the interval-clipped partition files, Vivace's intended layout.
Both paths' files are generated from the same source history and validated with the same query semantics and single-process oracle.
On SpotLake, Athena ICP and Vivace SCD2 are measured as an ablation to isolate the layout effect from the runtime effect.

Latency is recorded as the end-to-end time, the sum of the submit round trip and the terminal-status wait time.
Vivace cost sums function execution, object store I/O, and queue and commit registry requests, and Athena cost applies 5\,USD/TiB to scanned bytes with a 10\,MiB minimum, both computed from runtime billable units and official unit prices.
Log storage, intermediate objects, and post-execution artifact reads are excluded and recorded separately.

\begin{table*}[t]
\centering
\caption{Evaluation query suite and analysis examples.}
\label{tab:eval-query-suite}
\scriptsize
\setlength{\tabcolsep}{2pt} 
\begin{tabularx}{\textwidth}{@{}l l
>{\raggedright\arraybackslash}p{0.33\textwidth}
>{\raggedright\arraybackslash}p{0.3\textwidth}
>{\raggedright\arraybackslash}X@{}}
\toprule
\textbf{Query} & \textbf{Operator} & \textbf{Analysis question} & \textbf{Predicate or composition} & \textbf{Result semantics} \\
\midrule
Q1 & SO1 & When is spot capacity cheap? &
\texttt{SO1}(\texttt{SpotPrice} $<0.05$). &
Cheap-capacity intervals per entity. \\
Q2 & SO2 & What is each region's price exposure over the window? &
\texttt{SO2}(\texttt{SpotPrice} by \texttt{Region}). &
Region-level duration-weighted average price. \\
Q3 & SO3-I & When is capacity both cheap and stable? &
\texttt{SO3-I}(\texttt{SpotPrice} $<0.05$, \texttt{SPS} $\ge3$). &
Cheap-and-stable intervals. \\
Q4 & SO3-U & When is capacity cheap or stable? &
\texttt{SO3-U}(\texttt{SpotPrice} $<0.05$, \texttt{SPS} $\ge3$). &
Intervals satisfying at least one operational condition. \\
Q5 & SO3-D & When is capacity cheap but below the stability threshold? &
\texttt{SO3-D}(\texttt{SpotPrice} $<0.05$, \texttt{SPS} $\ge3$). &
Cheap intervals that still carry stability risk. \\
Q6 & SO4 & When does a region have enough cheap capacity? &
\texttt{SO4}(\texttt{SpotPrice} $<0.05$, \texttt{Region}, count $>100$). &
Regional market-depth intervals. \\
Q7 & AO1 & When is one AZ cheaper than another? &
\texttt{AO1}(\texttt{a1.2xlarge}, \texttt{aps1-az3} $\prec$ \texttt{aps1-az2}). &
Intervals where the left AZ is cheaper. \\
Q8 & AO2 & How many candidates beat a reference AZ? &
\texttt{AO2}(\texttt{a1.2xlarge}, min \texttt{SpotPrice} vs reference). &
Time-varying cheaper-candidate count. \\
Q9 & AO3 & Which candidate is cheapest at each time? &
\texttt{AO3}(\texttt{a1.2xlarge}, min \texttt{SpotPrice}, $k=1$). &
Cheapest-candidate timeline. \\
Q10 & AO3 & Which candidates are the top-$k$ cheapest over time? &
\texttt{AO3}(\texttt{a1.2xlarge}, min \texttt{SpotPrice}, $k=3$). &
Ranked cheapest-candidate timeline. \\
\bottomrule
\end{tabularx}
\end{table*}
 
\paragraph{Query suite}
Table~\ref{tab:eval-query-suite} summarizes the query suite, with the concrete analysis questions and predicates on SpotLake, the primary dataset.
The ten queries each use one core operator of \S\ref{sec:operators}, Q1--Q6 covering the same-base operators SO1--SO4 and Q7--Q10 the aligned-cohort operators AO1--AO3.
On CAISO and Wikipedia, the same ten queries are retained, with the predicates and group keys adapted to each domain, such as LMP/MCC or revision-size conditions.
 
For each target window, every system-query combination discards one warm-up run and takes the median of five repeats, reporting the mean of these medians over the ten queries. On SpotLake, the suite additionally runs over 1-, 3-, 6-, and 12-month windows to observe scaling with window length.
 
\subsection{Exactness Validation (RQ1)}
\label{sec:eval-exact}
 
To address RQ1, we compare the Vivace fan-out result with the single-process reference result in the canonical form of \S\ref{sec:exactness-theorem} for every operator family.
 
The canonical-form criterion differs by operator family. Interval window results (SO1, SO3-I/U/D) are compared at the row level after coalescing. SO2 aggregates compare the weighted sum, duration, min, and max exactly, with only the TWA ratio allowed a $10^{-9}$ tolerance. SO4 count timelines are canonicalized by group and count, with the threshold applied over the complete timeline, and aligned cohort operators are compared after canonicalizing the aligned payload.

The canonical forms matched for every query and window combination of Table~\ref{tab:eval-query-suite}.
The Athena baseline also produced the same results under the identical canonical-form criterion.

In the time-chunk sweep, the remote final Parquet results of five chunk factors ($\times1/4$, $\times1/2$, $\times1$, $\times2$, $\times4$) were read directly and compared with the monthly factor $\times1$ after per-carrier canonicalization.
All measured queries matched.
Notably, the raw output of SO1 varied from 12.12M to 12.26M rows across factors, yet after coalescing every factor agreed on the same 144,054 canonical intervals. These results confirm that exactness holds regardless of chunk division, and the subsequent performance comparisons rest on this premise.

\subsection{Layout Effectiveness (RQ2)}
\label{sec:eval-layout}
 
\begin{figure}[t]
\centering
\includegraphics[width=\columnwidth]{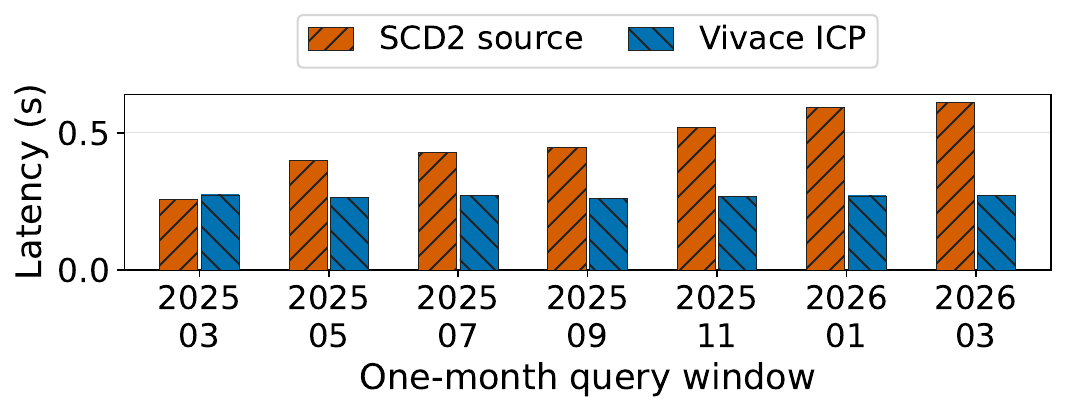}
\caption{SCD2 vs.\ ICP input preparation cost.}
\label{fig:eval-q1-input-form-latency}
\end{figure}
 
For RQ2, the input preparation cost and storage amplification of the ICP layout are measured against the SCD2 source.
 
\paragraph{Input preparation cost}
Fig.~\ref{fig:eval-q1-input-form-latency} compares the mapper latency of SCD2 and ICP while fixing a one-month query window and shifting the target month later. As the target month moves later, the number of prior monthly files SCD2 must read together grows.
The SO1 predicate window is the most basic operator and therefore exposes the mapper input preparation cost of the two input forms most directly.
 
SCD2 must locate the rows overlapping the query window in source files stored by \texttt{From} month. Both layouts show similar latency at first, where each reads a single file, but as prior files accumulate, SCD2 latency grows in proportion while ICP remains nearly flat. At the last window, SCD2 is about $2.25\times$ slower than ICP.
 
\paragraph{Storage amplification}
ICP clips boundary-crossing intervals into the files on both sides, so the row count and the total storage grow relative to the source relation. On the 12-month AWS dataset, the valid-from/to SCD2 source is 672.4\,MiB and the monthly ICP layout is 674.8\,MiB. ICP provides mapper-local input at $+0.3\%$ storage over the SCD2 source. The row count grows from boundary clipping, but the marginal storage difference is primarily attributed to Parquet's columnar compression. The $\From$ and $\To$ columns of clipped rows in ICP concentrate on the same time-chunk boundaries, raising the compression efficiency of dictionary and run-length encoding. As a result, the storage growth from added rows is offset by the compression gain in the timestamp columns.
 
These results confirm an asymmetry between the two layouts. ICP's storage increase is a one-time fixed cost, whereas SCD2's overlap lookup cost recurs at every query and grows in proportion to the history depth.
 
\subsection{Layout Portability (RQ3)}
\label{sec:eval-portability}
The layout predetermines the per-mapper fan-out, and fitting each mapper's input within its memory tier is Vivace's responsibility, different from query-time SCD2 scans, where this burden falls on the user. For RQ3, we test whether a single sizing model fits all three datasets, whose change rates and volumes differ widely, onto the same 5\,GB mapper tier.
 
\paragraph{Calibrated mapper piece ceiling}
The mapper piece ceiling $B_{fit}$ is calibrated only once, on SpotLake AWS. Materializing layouts at several byte targets and executing without shards, we take the largest target at which the six supported same-base queries finish exactly on a 5\,GB mapper without OOM or escalation to a larger memory tier. At larger targets the union/coalesce chunk exceeds the ceiling, but splitting the same chunk into two key shards brings it back within the ceiling. Aligned-cohort queries use cohort-closed placement, so $B_{fit}$ is defined on the same-base class. The resulting $B_{fit}=64$\,MiB is applied to every dataset.
 
\paragraph{Sizing outcomes}
Sizing sets the chunk width at or above $w_{safe}$ to bound storage and then divides primary keys by hash until each mapper piece fits within $B_{fit}$.
For SpotLake and CAISO, one chunk already fits within $B_{fit}$, so one mapper processes it without shards ($q=1$). For Wikipedia, revisions span the entire measurement window, so $w_{safe}$ equals the whole window. The chunk width therefore cannot be reduced, and a single chunk reaches 989\,MiB, about 15$\times$ $B_{fit}$. The model keeps this chunk as a single interval file clustered into primary-key buckets, and query-time logical sharding assigns each mapper only the bucket ranges its query needs, keeping the piece it reads within the memory tier. The six same-base queries ran over this query-time bucket sharding, and the four aligned-cohort queries over separate cohort-closed placement. All ten produced results identical to the single process on 5\,GB mappers without OOM.

\paragraph{Optimality}
This sizing satisfies the mapper memory constraint while avoiding fragmentation, and minimizing latency or cost is left as a separate problem. These results validate that the once-calibrated model generalizes, fitting all three datasets to the same mapper tier without recalibration.
 
\subsection{End-to-End Cost and Latency (RQ4)}
\label{sec:eval-cost-latency}
 
\begin{figure}[t]
\centering
\includegraphics[width=0.52\columnwidth]{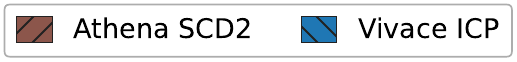}\\[-10pt]
\subfloat[Latency]{%
  \includegraphics[width=0.49\columnwidth]{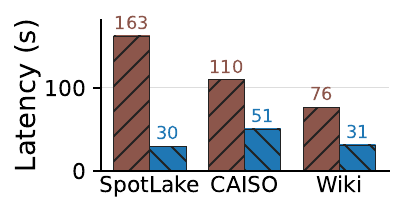}%
  \label{fig:eval-final-v2-latency}}\hfill
\subfloat[Cost]{%
  \includegraphics[width=0.49\columnwidth]{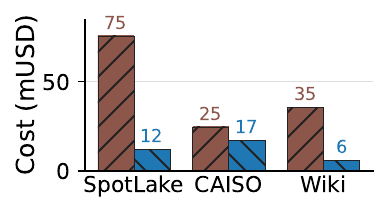}%
  \label{fig:eval-final-v2-cost}}
\caption{End-to-end latency and cost over the ten-query suite.}
\label{fig:eval-final-3dataset-bars}
\end{figure}

To address RQ4, Vivace is compared with SQL engine baselines in end-to-end cost and latency. Fig.~\ref{fig:eval-final-3dataset-bars} presents the per-dataset comparison, where Athena reads the natural SCD2 source and Vivace reads the ICP layout.
Vivace lowers the aggregate latency and cost of the ten queries on all three datasets. The gap is largest on SpotLake, where the history is deepest, at $5.44\times$ in latency and $6.23\times$ in cost. CAISO and Wikipedia follow at $2.17\times$ and $2.44\times$ in latency, with 30\% and 83\% lower cost.
These gains incur almost no storage overhead, with a Vivace-to-Athena storage ratio of 0.91--1.03.

\begin{figure}[t]
\centering
\includegraphics[width=\columnwidth]{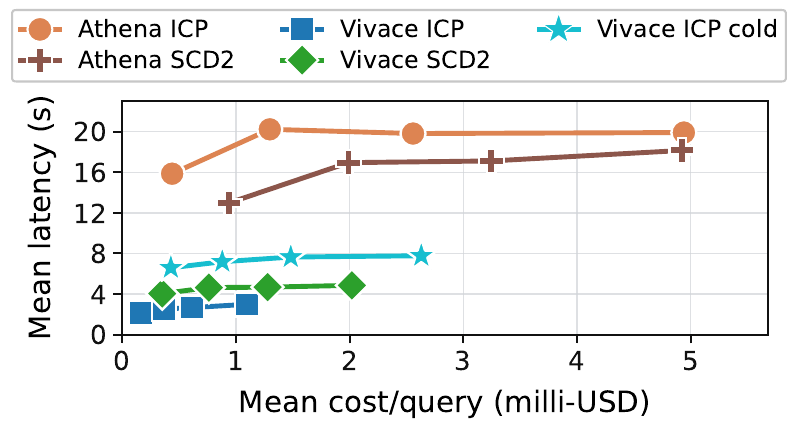}
\caption{SpotLake cost and latency across query windows}
\label{fig:eval-cost-latency}
\end{figure}

\paragraph{SpotLake deep dive}
On SpotLake we examine scaling by widening the query suite over 1-, 3-, 6-, and 12-month windows. Fig.~\ref{fig:eval-cost-latency} presents the per-query average cost in milli-USD (mUSD) on the x-axis and the latency on the y-axis for five execution paths, with each path's leftmost point at the 1-month window. Paths closer to the origin execute at lower cost and latency.
BigQuery, measured under the same SQL contract, is excluded because its logical-scan-bytes billing makes the per-query cost far higher than Athena's. Notably, Athena ICP exceeds Athena SCD2 in both latency and cost. Athena does not assign mappers per file, so ICP only adds scan volume without its partition-local benefit.
Vivace ICP's latency grows only from 2.10\,s to 2.98\,s as the window widens from 1 to 12 months. The mapper count grows with the window, but parallel execution keeps the latency growth sublinear. The curve stays closer to the origin than Athena SCD2 throughout. In contrast, Vivace SCD2's overlap lookup of prior monthly files grows with window length, and at 12 months it is 63\% slower and about $1.8\times$ more expensive than ICP.
Under forced cold starts, Vivace ICP reaches 7.77\,s and 2.63\,mUSD at 12 months, still $2.34\times$ faster than Athena SCD2 at 53\% of the cost. Since cold starts are the common condition under intermittent access, this advantage is of practical significance.

\paragraph{Always-on baselines}
We compare against two always-on baselines, each with a distinct advantage: ClickHouse~\cite{clickhouse} for fast columnar scans, and MobilityDB~\cite{mobilitydb} for native interval semantics that simplify query writing. Both hold the 12-month SpotLake history on an r7i.xlarge (4\,vCPU, 32\,GiB) instance, the specification recommended by the official ClickHouse documentation.

ClickHouse's infrastructure cost is fixed at about 201\,USD per month regardless of the query count. The break-even frequency at which its per-query cost equals Vivace ICP is about 250 queries per hour at the 12-month window and about 1,700 per hour at the 1-month window, and below those rates serverless is cheaper.
MobilityDB's native temporal types and aggregates such as \texttt{twAvg} simplify query writing, but execution is slower, with a 12-month 10-query average of 7.86\,s, about 2.6$\times$ Vivace ICP (2.98\,s). Against either advantage, Vivace delivers practical latency while keeping a pay-per-query cost that scales only with execution frequency.
 
\subsection{Operator Generality (RQ5)}
\label{sec:eval-generality}
For RQ5, we evaluate operator families on an independent SCD Type 2 benchmark with an entirely different schema.
 
\paragraph{MobilityDB benchmark}
MobilityDB-TPCDS~\cite{mobilitydb-tpcds} is a public temporal warehouse workload that provides the historical dimension tables of the TPC-DS~\cite{tpcds} benchmark in three implementations: SCD Type 2, temporal data warehouse, and MobilityDB temporal types. This evaluation uses its \texttt{scd\_item} table (SF100). Each row is an SCD Type 2 record holding the $[\From,\To)$ interval during which attributes of one item, such as brand, class, and price, persisted, a schema entirely different from SpotLake.
 
The six temporal algebra queries of this benchmark (benchmark notation Q1--Q6, hereafter MQ1--MQ6) are all expressed as compositions of Vivace operator families, requiring no new operator (Table~\ref{tab:eval-mobilitydb}). Correctness was judged by an independent oracle built from the original \texttt{scd\_item} without Vivace primitives. For all six queries, the fan-out results matched this oracle and the single-process reference in canonical form, and MQ1--MQ4 and MQ6 also matched native MobilityDB.
 
MQ5 computes a temporal difference followed by a key-domain restriction. It finds the parts of a brand's intervals where the price is at or below a threshold, restricted to items whose price ever exceeded it. With brand intervals $B$ and threshold-exceeding intervals $P_i$ per item, the core computation is $B \setminus \bigcup_i P_i$ plus an item-level existence condition. Implementing this difference directly with interval set operations is error-prone at two steps: collecting and subtracting the multiple $P_i$, and clipping the result back inside $B$. The benchmark's three official SQL implementations indeed diverge at exactly these points. \texttt{Q5\_MobDB} subtracts $\bigcap_i P_i$ instead of $\bigcup_i P_i$, retaining intervals that should be removed whenever two or more $P_i$ exist, and \texttt{Q5\_TDW} and \texttt{Q5\_SCD} do not clip the result to $B$, so their results extend outside it. These errors affect 19 of the 94 target items at SF100.

In Vivace, owing to the full-state same-base intervals, this difference reduces to a single boolean over one interval. Evaluating ``the brand condition holds and price $\le$ threshold'' on each interval suffices, and since the intervals are already clipped into closed segments, neither error point arises. The item-level existence condition is applied once by the key-domain semi-join (\S\ref{sec:merge-laws} L7) after both windows are complete. As a result, against the same oracle, only Vivace was correct on all 94 items (94/94), while \texttt{Q5\_MobDB} reached 93 and \texttt{Q5\_TDW} and \texttt{Q5\_SCD} reached 75.
 
These results validate that Vivace's operator families express temporal algebra over SCD Type 2 schemas in general, not over a specific dataset, without new primitives.
 
\begin{table}[t]
\centering
\caption{MobilityDB-TPCDS MQ1--MQ6 as compositions of Vivace operators, each discharged by a merge law.}
\label{tab:eval-mobilitydb}
\setlength{\tabcolsep}{3pt}
\begin{tabularx}{\columnwidth}{@{}l >{\raggedright\arraybackslash}X l l@{}}
\toprule
\textbf{Query} & \textbf{Vivace operator composition} & \textbf{Carrier} & \textbf{Law} \\
\midrule
MQ1 & predicate window (SO1) $\to$ coalesce & $W$ & L1 \\
MQ2 & projection $\to$ coalesce & $W$ & L1 \\
MQ3 & same-base boolean and (SO3-I) $\to$ coalesce & $W$ & L1 \\
MQ4 & same-base boolean or (SO3-U) $\to$ coalesce & $W$ & L1 \\
MQ5 & difference (SO3-D) $+$ key-domain finalizer & $(W, W_\psi)$ & L7 \\
MQ6 & count timeline (SO4) $\to$ threshold & $C_{step}$ & L3,\,L4 \\
\bottomrule
\end{tabularx}
\end{table}

\section{Related Work}
\paragraph{Temporal databases and analytics}
Sequenced semantics defines the meaning of a temporal query through the per-time-point interpretation of the relation, and coalescing arranges interval results with differing boundaries into a comparable canonical form~\cite{sql2011,jensen-unification,dignos,dignos-sequenced,dignos-multiset}. Access methods for time-evolving data are surveyed in~\cite{temporal-access-survey}, and parallel temporal aggregation has been studied on shared-nothing architectures~\cite{parallel-agg-icde99}. Timeline Index~\cite{timeline-index}, ParTime~\cite{partime}, and temporal ranking~\cite{temporal-ranking} accelerate temporal aggregation, joins, and aggregate top-$k$ over query intervals efficiently, but all assume an in-memory engine. MobilityDB~\cite{mobilitydb} extends PostgreSQL with native temporal types and operators, but requires an always-on database server.

\paragraph{Interval histories and versioned tables}
SCD Type 2~\cite{scd} stores entity versions with their start and end times, and TPC-DS~\cite{tpcds} and MobilityDB-TPCDS~\cite{mobilitydb-tpcds} provide historical dimensions as benchmarks. Lakehouse features such as Iceberg time travel and Delta CDF likewise yield valid-from/to rows from table versions and row-level changes~\cite{iceberg-time-travel,delta-cdf,snowflake-change,dex}. However, exact query-window analytics over these histories remains unaddressed.
 
\paragraph{Object-store SQL and serverless analytics}
Athena~\cite{athena} and BigLake~\cite{biglake} query object storage with SQL, but users must hand-write the interval overlap and clipping logic. Columnar engines such as ClickHouse~\cite{clickhouse} also scan object storage directly but require an always-on cluster. Lambada~\cite{serverless-db-lambada}, Starling~\cite{scalable-query-starling}, and Cackle~\cite{cackle} provide serverless fan-out execution but target general relational queries, lacking the temporal handling that interval histories require. 

\paragraph{Mergeable and incremental computation}
MapReduce combiners, Spark RDDs, the Dataflow model, and incremental view maintenance~\cite{mapreduce,spark-rdd,dataflow,mergeable-summaries,dbtoaster,dbsp,fivm} established the principle of combining local summaries or deltas later. Vivace applies this principle to temporal operators over intervals: associative aggregates such as duration and count merge their partials directly, while non-associative operations such as ranking record the comparator and shard placement in the plan to reproduce the exact result.
 
Table~\ref{tab:related-comparison} contrasts Vivace with prior systems.

\begin{table}[t]
  \centering
  \setlength{\tabcolsep}{2pt}
  \renewcommand{\arraystretch}{1.25}
  \scriptsize
  \caption{Comparison with prior systems. Always-on temporal covers Timeline Index, ParTime, and MobilityDB.}
  \label{tab:related-comparison}
  \begin{tabular*}{\columnwidth}{@{\extracolsep{\fill}}lccccc@{}}
    \toprule
    & \textbf{\shortstack{Always-on\\temporal}} & \textbf{ClickHouse} & \textbf{\shortstack{Athena/\\BigLake}} & \textbf{\shortstack{Lambada/\\Starling}} & \textbf{Ours} \\
    \midrule
    \makecell[l]{\textbf{Temporal operators on}\\ \textbf{interval histories}}
      & \checkmark & -- & -- & -- & \checkmark \\
    \makecell[l]{\textbf{Queries data on} \textbf{object storage}}
      & -- & \checkmark & \checkmark & \checkmark & \checkmark \\
    \textbf{Pay-per-query execution}
      & -- & -- & \checkmark & \checkmark & \checkmark \\
    \makecell[l]{\textbf{Exact partitioned execution}\\ \textbf{w/o state exchange (R1, R2)}}
      & -- & -- & -- & -- & \checkmark \\
    \bottomrule
  \end{tabular*}
\end{table}

\section{Conclusion and Future Work}
 
This paper presented Vivace, a system that executes temporal OLAP over interval histories across independent serverless functions.
By cutting the source relation into a time-partitioned layout before queries, each function obtains its complete input from a single file, and the reducer merges the per-function Carriers by operator-specific rules. We proved as a theorem that under these two requirements the result is exactly that of single-process execution.
Against a SQL baseline on three real datasets with different interval characteristics, this design removed the input preparation cost that grows with history depth, reducing latency by up to 82\% and cost by up to 84\%.
 
Vivace currently covers queries over a single interval relation or its join with time-invariant dimensional attributes. Temporal joins across relations, dynamically constructed cohorts, and sequential or sliding-window patterns lie outside this scope, and the planner rejects such requests.
Future work includes defining a multi-relation co-partitioned closure to support restricted temporal joins, introducing boundary-state carriers to cover some sliding-window patterns, and selecting the chunk size and layout automatically for a given workload.

\section*{Acknowledgment}
We used Anthropic's Claude Code and OpenAI's Codex to assist with the system implementation of this work. All AI-generated code was reviewed and verified by the authors. 

\bibliographystyle{IEEEtran}
\bibliography{vivace}

\end{document}